\documentclass[%
 reprint,
 amsmath,amssymb,
 aps,
 pra,
floatfix,
]{revtex4-2}

\usepackage{graphicx}
\usepackage{dcolumn}
\usepackage{bm}
\usepackage{amsthm} 
\usepackage{amsmath, amssymb}
\usepackage[braket, qm]{qcircuit}
\usepackage{mathptmx}
\usepackage{braket}
\usepackage{amsfonts}
\usepackage{csquotes}
\usepackage{hhline}
\usepackage{relsize}
\usepackage{listings}
\usepackage{color}
\usepackage{physics}
\usepackage{soul}
\usepackage[linesnumbered,ruled,vlined]{algorithm2e}
\usepackage[hidelinks]{hyperref}
\usepackage{braket}

\makeatletter
\def\pdfstartlink@attr{attr{/Border[0 0 0]}}
\makeatother

\newtheorem{theorem}{Theorem}

\newtheorem{lemma}{Lemma}
\newtheorem{coro}{Corollary}

\begin{document}


\title{Local and Multi-Scale Strategies to Mitigate Exponential Concentration in Quantum Kernels}

\author{Claudia Zendejas-Morales}
\affiliation{Department of Mathematical Sciences, University of Copenhagen, Denmark}

\author{Debashis Saikia}
\affiliation{Department of Physics, Indian Institute of Science Education and Research, Thiruvananthapuram}


\author{Utkarsh Singh}
\affiliation{Department of Physics, University of Ottawa, 25 Templeton Street, Ottawa, Ontario, K1N 6N5 Canada}
\affiliation{National Research Council of Canada, 100 Sussex Drive, Ottawa, Ontario K1N 5A2, Canada}

\begin{abstract}
Fidelity-based quantum kernels can exhibit exponential concentration as the feature dimension or circuit expressivity grows, causing off-diagonal similarities to collapse and the Gram matrix to approach the identity. We study Qiskit implementations of local kernels that aggregate subsystem similarities and multi-scale kernels that combine several patch sizes. A Haar-random reference calculation shows that the mean raw similarity scale changes from $2^{-d}$ globally to $2^{-k}$ on $k$-qubit patches. Across six tabular datasets and $d\in\{4,6,\dots,20\}$, both constructions consistently reduce off-diagonal concentration relative to the global fidelity baseline. Effective-rank behavior is distinct from concentration, while SVM accuracy remains dataset-dependent.
\end{abstract}
\maketitle


\section{Introduction}\label{intro}

Kernel methods are a standard approach to nonlinear learning that separates representation from optimization. Given training data $\{\mathbf{x}_i\}_{i=1}^n \subset \mathbb{R}^d$, a positive semidefinite kernel $k: \mathbb{R}^d \times \mathbb{R}^d \to \mathbb{R}$ defines a Gram matrix $K\in\mathbb{R}^{n\times n}$, with entries
\begin{equation}
K_{ij} = k(\mathbf{x}_i,\mathbf{x}_j).
\label{eq:gram_def}
\end{equation}

Classical algorithms such as support vector machines can then be trained using $K$ \cite{scholkopf2002learning}.

Quantum kernel methods instantiate $k$ using a quantum feature map. A parameterized circuit $U_{\phi}(\mathbf{x})$ prepares an encoded state
\begin{equation}
\ket{\psi(\mathbf{x})} = U_{\phi}(\mathbf{x})\ket{0}^{\otimes d}.
\label{eq:feature_state}
\end{equation}
Here, $\phi$ denotes the feature map implemented by the circuit family $U_{\phi}$, and $\ket{\psi(\mathbf{x})}$ denotes the corresponding encoded quantum state. The choice of data encoding (feature map) critically shapes the induced kernel and, consequently, the expressivity and generalization behavior of the resulting quantum model \cite{schuld2021effect}. A common choice of kernel is the squared fidelity
\begin{equation}
k_{\mathrm{fid}}(\mathbf{x},\mathbf{x}') = \left|\langle\psi(\mathbf{x})\mid\psi(\mathbf{x}')\rangle\right|^2.
\label{eq:fidelity_kernel}
\end{equation}

This approach is attractive in the noisy intermediate-scale quantum regime because the optimization problem remains classical while the quantum device provides a structured, potentially high-dimensional embedding \cite{schuld2019quantum,havlivcek2019supervised,preskill2018quantum, biamonte2017quantum}.

A central obstacle for fidelity-based quantum kernels is \emph{exponential concentration}. Empirically and theoretically, as the number of qubits and$/$or circuit expressivity increase, overlaps between distinct encoded states can concentrate around a data-independent value \cite{thanasilp2024exponential,Agliardi2025,singh2025}. In the regime where feature-map states resemble random states, off-diagonal entries of $K$ become small and the Gram matrix approaches a low-variance form. This collapse can suppress label-relevant structure and can make reliable estimation of $k_{\mathrm{fid}}(\mathbf{x},\mathbf{x}')$ shot-expensive \cite{thanasilp2024exponential}. For kernel methods, concentration is therefore both a statistical issue (loss of informative variation across $K$) and an operational issue (measurement cost).

Related work has highlighted both the promise and limitations of quantum kernels. Early demonstrations of quantum-enhanced feature spaces focused on fidelity-style kernels induced by parametric circuits \cite{havlivcek2019supervised,schuld2019quantum}. Subsequent analyses emphasized that the apparent expressivity of large Hilbert spaces does not by itself guarantee learnability or generalization, and that kernel quality depends on data, encoding, and measurement choices \cite{huang2021power,thanasilp2024exponential}. We study two mitigation strategies that modify the similarity statistic through locality and scale mixing, and evaluate them using matched protocols across increasing feature dimension.

This work studies two practical strategies to mitigate the reliance on a single global overlap. Let $\mathcal{Q}_d=\{0,1,\dots,d-1\}$ denote the set of qubit indices. The first strategy constructs \emph{local (patch-wise) kernels} by evaluating similarity on small subsystems and aggregating the results. Let $\mathcal{P}=\{P_1,\dots,P_M\}$ be a collection of patches, with $P_m\subseteq\mathcal{Q}_d$, and let $P_m^{c}=\mathcal{Q}_d\setminus P_m$. For each patch $P_m$ we form the reduced state
\begin{equation}
\rho_{P_m}(\mathbf{x}) = \operatorname{Tr}_{P_m^{c}}\!\left(\ket{\psi(\mathbf{x})}\bra{\psi(\mathbf{x})}\right),
\label{eq:rdm_def}
\end{equation}
and define a patch kernel, for example via the Hilbert--Schmidt inner product
\begin{equation}
\kappa_{P_m}(\mathbf{x},\mathbf{x}') = \operatorname{Tr}\!\left(\rho_{P_m}(\mathbf{x})\,\rho_{P_m}(\mathbf{x}')\right).
\label{eq:hs_patch_kernel}
\end{equation}
The local kernel is obtained by a convex aggregation,
\begin{equation}
\begin{aligned}
k_{\mathrm{loc}}(\mathbf{x},\mathbf{x}')
&=
\sum_{m=1}^{M}
w_m\,\kappa_{P_m}(\mathbf{x},\mathbf{x}'),
\\
&w_m\geq 0,
\qquad
\sum_{m=1}^{M}w_m=1.
\end{aligned}
\label{eq:local_kernel}
\end{equation}

The second strategy constructs \emph{multi-scale kernels} by combining kernels computed at multiple patch granularities. Given a set of scales $\{\mathcal{P}^{(s)}\}_{s=1}^{S}$, we build one kernel per scale and define the multi-scale kernel function as a convex combination
\begin{equation}
\begin{aligned}
k_{\mathrm{ms}}(\mathbf{x},\mathbf{x}')
&=
\sum_{s=1}^{S}
\alpha_s\,k^{(s)}(\mathbf{x},\mathbf{x}'),
\\
&\alpha_s\geq 0,
\qquad
\sum_{s=1}^{S}\alpha_s=1.
\end{aligned}
\label{eq:multiscale_kernel}
\end{equation}
Here, $k^{(s)}$ denotes the kernel induced by scale $\mathcal{P}^{(s)}$.
Locality reduces sensitivity to global scrambling, while multi-scale mixing aims to preserve information that may appear at different subsystem sizes.

We evaluate the three kernel families under a matched and reproducible experimental protocol. We implement baseline (global), local, and multi-scale kernels in Qiskit under a unified API and benchmark them with matched experimental protocols over feature dimension $d\in\{4,6,\dots,20\}$.
To quantify kernel concentration and its downstream impact, we report off-diagonal summary statistics and effective rank as a summary of spectral diversity, use centered alignment with labels as a complementary diagnostic, and evaluate support vector machine (SVM) performance using precomputed kernels. Across the datasets we consider, local and multi-scale constructions consistently reshape kernel geometry relative to the baseline fidelity kernel, while improvements in accuracy remain dataset-dependent.

\noindent\textbf{Contributions and organization.} We provide (i) Qiskit implementations of baseline, local, and multi-scale quantum kernels under a unified API, (ii) a reproducible benchmark pipeline with fixed preprocessing, splits, and hyperparameter policies, and (iii) diagnostics that quantify concentration and downstream performance across a feature-dimension sweep. The remainder of this paper is organized as follows. Section~\ref{theory} recalls expressivity-induced concentration in global fidelity kernels and analyzes the mean raw similarity scales of global, local, and multi-scale constructions using independent Haar-random states as an analytical reference. Section~\ref{methods} describes the kernel constructions, feature maps, datasets, and evaluation protocol. Section~\ref{results} presents experimental results and diagnostic analyses. Code and configuration files to reproduce the experiments and figures are available in the accompanying repository.

\section{Theoretical analysis of concentration}\label{theory}

This section provides an analytical reference for how locality and multi-scale aggregation modify the mean similarity scales relevant to quantum-kernel concentration. We begin by recalling the expressivity-induced concentration behavior of global fidelity kernels established by \textit{Thanasilp et al.} \cite{thanasilp2024exponential}. We then use independent Haar-random pure states as a reference model to compare the mean raw global fidelity scale $2^{-d}$ with the mean raw Hilbert--Schmidt similarity scale $2^{-k}$ on a $k$-qubit patch. Finally, we show that local aggregation preserves this patch-dependent scale and that multi-scale aggregation produces convex combinations of the scales associated with its constituent patch sizes.

The local and multi-scale calculations below apply specifically to the Hilbert--Schmidt patch similarity defined in Eq.~\eqref{eq:hs_patch_kernel}. We do not assume that the data-encoded states used in our experiments are Haar-random. Independent Haar-random states are used only as a reference model to illustrate how the mean raw similarity depends on the dimension of the subsystem on which it is evaluated. The mean-scale identities concern raw kernel values before the unit-diagonal Gram-matrix normalization described in Section~\ref{methods}.

\subsection{Exponential concentration in global kernels}

Recent work by \textit{Thanasilp et al.} \cite{thanasilp2024exponential} established that quantum kernels can exhibit exponential concentration as the number of qubits increases. Specifically, concentration may arise from several mechanisms, including highly expressive embeddings, global measurements, entanglement, and noise. Under the corresponding assumptions, kernel values become increasingly difficult to distinguish across different input pairs, potentially reducing the information retained by the Gram matrix.

\begin{theorem}[Expressivity-induced concentration, adapted from Thanasilp \textit{et al.}]
\label{thm:global_concen}
Consider the global fidelity kernel defined in Eq.~\eqref{eq:fidelity_kernel}. Let $\mathbf{x}$ and $\mathbf{x}'$ be drawn independently from the same input distribution, and let $\mathcal{U}_{\phi}$ denote the induced ensemble of data-encoding unitaries. Under the assumptions of \cite{thanasilp2024exponential}, for any $\delta>0$,
\begin{equation}
\Pr_{\mathbf{x},\mathbf{x}'}\!\left[
\left|
k_{\mathrm{fid}}(\mathbf{x},\mathbf{x}')
-
\mathbb{E}_{\mathbf{x},\mathbf{x}'}\!
\left[
k_{\mathrm{fid}}(\mathbf{x},\mathbf{x}')
\right]
\right|
\geq \delta
\right]
\leq
\frac{
\Gamma_d\!\left(\varepsilon_{\mathcal{U}_{\phi}}\right)
}{
\delta^2
}.
\label{eq:expressivity_concentration}
\end{equation}
Here, $\delta$ is the deviation threshold, $\varepsilon_{\mathcal{U}_{\phi}}$ quantifies the distance of the encoding ensemble from the Haar reference, with smaller values corresponding to greater expressivity, and $\Gamma_d(\varepsilon_{\mathcal{U}_{\phi}})$ is the corresponding
dimension- and expressivity-dependent concentration factor for the fidelity kernel. If $\varepsilon_{\mathcal{U}_{\phi}}=\mathcal{O}(b^{-d})$ for some $b>1$, the bound decreases exponentially with $d$.
\end{theorem}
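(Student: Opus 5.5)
The approach is Chebyshev's inequality. The real work is bounding the variance of $k_{\mathrm{fid}}(\mathbf{x},\mathbf{x}')$ over independent draws $\mathbf{x},\mathbf{x}'$ by a quantity controlled by the Haar variance plus an expressivity correction. For any $\delta>0$, Chebyshev gives
\begin{equation*}
\Pr_{\mathbf{x},\mathbf{x}'}\!\left[\left|k_{\mathrm{fid}}-\mathbb{E}[k_{\mathrm{fid}}]\right|\geq\delta\right]\leq\frac{\operatorname{Var}_{\mathbf{x},\mathbf{x}'}[k_{\mathrm{fid}}]}{\delta^2}.
\end{equation*}
So it suffices to show $\operatorname{Var}[k_{\mathrm{fid}}]\leq\Gamma_d(\varepsilon_{\mathcal{U}_\phi})$ for an explicit $\Gamma_d$. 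To set this up, write $\operatorname{Var}[k_{\mathrm{fid}}]\leq\mathbb{E}[k_{\mathrm{fid}}^2]$ and express both moments through the state ensemble. Since $k_{\mathrm{fid}}=\operatorname{Tr}(\rho\rho')$ with $\rho=\ket{\psi(\mathbf{x})}\bra{\psi(\mathbf{x})}$, independence gives
\begin{equation*}
\mathbb{E}[k_{\mathrm{fid}}^2]=\operatorname{Tr}\!\left[\mathcal{A}^{(2)}_{\mathcal{U}_\phi}\,\mathcal{A}'^{(2)}_{\mathcal{U}_\phi}\right],
\end{equation*}
where $\mathcal{A}^{(2)}_{\mathcal{U}_\phi}=\mathbb{E}_{U\sim\mathcal{U}_\phi}[(U\ket{0}\bra{0}U^\dagger)^{\otimes 2}]$ is the second-moment operator. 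More precisely, the right-hand side is the trace of the product of the two second-moment operators contracted against the swap structure, which gives $\operatorname{Tr}(\rho\rho')^2=\operatorname{Tr}[(\rho\otimes\rho)(\rho'\otimes\rho')]$.

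Next, decompose $\mathcal{A}^{(2)}_{\mathcal{U}_\phi}=\mathcal{A}^{(2)}_{\mathrm{Haar}}+\Delta$. Following \cite{thanasilp2024exponential}, define the expressivity measure as $\varepsilon_{\mathcal{U}_\phi}=\|\Delta\|_1$, the trace distance of the induced 2-design deviation. The Haar part is explicit, $\mathcal{A}^{(2)}_{\mathrm{Haar}}=\frac{\mathbb{1}+\mathrm{SWAP}}{2^d(2^d+1)}$, and the Haar-Haar term evaluates to $\mathbb{E}_{\mathrm{Haar}}[k^2]=\frac{2}{2^d(2^d+1)}$. The cross terms are bounded by Hölder's inequality, $|\operatorname{Tr}(\mathcal{A}_{\mathrm{Haar}}\Delta)|\leq\|\mathcal{A}_{\mathrm{Haar}}\|_\infty\|\Delta\|_1\leq\frac{2}{2^d(2^d+1)}\varepsilon$. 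The $\Delta\Delta$ term satisfies $|\operatorname{Tr}(\Delta\Delta')|\leq\|\Delta\|_\infty\|\Delta'\|_1\leq\varepsilon^2$. Collecting these terms yields
\begin{equation*}
\Gamma_d(\varepsilon)=\frac{2}{2^d(2^d+1)}+\frac{4\varepsilon}{2^d(2^d+1)}+\varepsilon^2 \;\lesssim\; 2^{-2d}+\varepsilon(2+\varepsilon).
\end{equation*}
This is the form in \cite{thanasilp2024exponential}, up to constants. Substituting $\varepsilon=\mathcal{O}(b^{-d})$ makes every term exponentially small in $d$, which proves the final sentence of the statement.

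The main obstacle is fixing the precise definition of $\varepsilon_{\mathcal{U}_\phi}$ so that the Hölder steps are legitimate. The expressivity measure is naturally defined for the unitary ensemble acting on the fixed input $\ket{0}$, which is the state-level 2-design deviation; an operator-norm diamond distance would not be the right object here. One must also check that the independent draws of $\mathbf{x}$ and $\mathbf{x}'$ induce independent unitaries from the same ensemble, so that the second moment factorizes. I would first try the state-level trace-norm definition, since it makes the bound clean. If the paper's adaptation instead uses a Frobenius-norm deviation, I would replace Hölder by Cauchy--Schwarz, which costs at most a dimension-free factor after normalization.
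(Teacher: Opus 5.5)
The paper does not prove this theorem; it imports it from Thanasilp \emph{et al.}~\cite{thanasilp2024exponential}. The argument there has the same skeleton as yours. It applies Chebyshev, then uses $\operatorname{Var}[k_{\mathrm{fid}}]\le\mathbb{E}[k_{\mathrm{fid}}^2]=\operatorname{Tr}(M^2)$ with $M=\mathbb{E}_{\mathbf{x}}\bigl[(\ket{\psi(\mathbf{x})}\bra{\psi(\mathbf{x})})^{\otimes 2}\bigr]$, and finally compares $M$ with $M_{\mathrm{Haar}}=(I+\mathrm{SWAP})/(2^d(2^d+1))$. No SWAP contraction is needed in the middle step, since $\operatorname{Tr}(\rho\rho')^2=\operatorname{Tr}[(\rho\otimes\rho)(\rho'\otimes\rho')]$ directly.

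Where you diverge is the norm, and it matters more than you suggest. The cited expressivity measure is the Schatten-2 norm $\varepsilon_{\mathcal{U}_\phi}=\|M_{\mathrm{Haar}}-M\|_2$, not the trace norm, so your attribution of $\|\Delta\|_1$ to that paper is wrong. Because $\mathbb{E}[k_{\mathrm{fid}}^2]=\|M\|_2^2$ exactly, the triangle inequality gives at once $\Gamma_d(\varepsilon)=(\sqrt{\beta}+\varepsilon)^2=\beta+\varepsilon(\varepsilon+2\sqrt{\beta})$, with $\beta=2/(2^d(2^d+1))$. That is the factor the statement refers to, and it is what your Cauchy--Schwarz fallback produces.

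Your trace-norm version is internally correct, but it is not the same result ``up to constants.'' Since $\|\Delta\|_2\le\|\Delta\|_1$, your hypothesis $\|\Delta\|_1=\mathcal{O}(b^{-d})$ is strictly stronger, and the gap between the two norms is not dimension-free. Take Haar-random states with the first qubit fixed to $\ket{0}$. Then $\|\Delta\|_2=\mathcal{O}(2^{-d})$ and $\mathbb{E}[k_{\mathrm{fid}}^2]=8/(2^d(2^d+2))$. Yet $\|\Delta\|_1\to 3/2$, because $M$ is supported on only about a quarter of the symmetric subspace. Your primary bound is vacuous in this example, while the Schatten-2 bound certifies concentration. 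So the Frobenius route should be your main argument, not the fallback.

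You can also drop the cross-term estimate altogether, because $\operatorname{Tr}(M_{\mathrm{Haar}}\Delta)=0$ identically. $M$ has unit trace and, as an average of $\psi^{\otimes 2}$ with $\psi$ pure, satisfies $\operatorname{Tr}(\mathrm{SWAP}\,M)=1$. Hence $\operatorname{Tr}(M_{\mathrm{Haar}}M)=2/(2^d(2^d+1))=\operatorname{Tr}(M_{\mathrm{Haar}}^2)$. It follows that $\mathbb{E}[k_{\mathrm{fid}}^2]=\beta+\|\Delta\|_2^2$ exactly, so one may take $\Gamma_d(\varepsilon)=\beta+\varepsilon^2$. This is sharper than both your bound and the cited one, and needs neither H\"older nor Cauchy--Schwarz.
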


For the Haar-random reference model used in the remainder of this section, let $\ket{\psi}$ and $\ket{\varphi}$ be two independent Haar-random pure states in $(\mathbb{C}^{2})^{\otimes d}$. Their mean global fidelity follows from the standard Haar first-moment identity \cite{thanasilp2024exponential,mele2024introduction},
\begin{equation}
\mathbb{E}\!\left[
k_{\mathrm{fid}}(\psi,\varphi)
\right]
=
\mathbb{E}\!\left[
|\langle\psi|\varphi\rangle|^2
\right]
=
2^{-d}.
\label{eq:global_haar_mean}
\end{equation}
Here, the expectation is taken over the two independent Haar-random states. This identity specifies the characteristic mean similarity scale and should not by itself be interpreted as a concentration bound.

Theorem~\ref{thm:global_concen} bounds deviations from the mean under sufficiently expressive encoding ensembles, while Eq.~\eqref{eq:global_haar_mean} evaluates that mean in the Haar-random reference model. Together, they provide the relevant dimensional intuition: in a Haar-like regime, global fidelities can concentrate around a scale that decreases as $2^{-d}$. This motivates examining how the characteristic mean similarity changes when evaluated on lower-dimensional local subsystems.

\subsection{Patch-wise kernel mean scale}

Equation~\eqref{eq:global_haar_mean} identifies the mean global-fidelity scale $2^{-d}$ in the Haar-random reference model. To examine how this scale changes when the similarity is evaluated on a local subsystem, fix a patch $P_m\in\mathcal{P}$ of size $k=|P_m|$.

Following the reduced-state and Hilbert--Schmidt constructions defined in Eqs.~\eqref{eq:rdm_def} and~\eqref{eq:hs_patch_kernel}, for two pure states $\ket{\psi}$ and $\ket{\varphi}$ we write
\begin{equation}
\rho_{P_m}^{\psi}
=
\operatorname{Tr}_{P_m^{c}}
\left(\ket{\psi}\bra{\psi}\right),
\qquad
\rho_{P_m}^{\varphi}
=
\operatorname{Tr}_{P_m^{c}}
\left(\ket{\varphi}\bra{\varphi}\right),
\end{equation}
and
\begin{equation}
\kappa_{P_m}(\psi,\varphi)
=
\operatorname{Tr}
\left(
\rho_{P_m}^{\psi}
\rho_{P_m}^{\varphi}
\right).
\end{equation}

The patch-wise similarity is evaluated on a $2^k$-dimensional subsystem rather than on the full $2^d$-dimensional Hilbert space. The following result shows that, within the Haar-random reference model, the mean raw similarity scale is determined by the patch size $k$.

\begin{theorem}[Patch-wise mean scale]
\label{thm:patch-concen}
Let $P_m\in\mathcal{P}$ be a patch of size $k$. For two independent Haar-random pure states $\ket{\psi},\ket{\varphi}\in(\mathbb{C}^{2})^{\otimes d}$,
\begin{equation}
    \mathbb{E}\!\left[
    \kappa_{P_m}(\psi,\varphi)
    \right]
    =
    2^{-k}.
\label{eq:patch_haar_mean}
\end{equation}
\end{theorem}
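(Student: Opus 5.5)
The plan is to exploit independence of $\ket{\psi}$ and $\ket{\varphi}$ together with linearity of the trace, so that only the \emph{first} Haar moment is needed. Since $\kappa_{P_m}(\psi,\varphi)=\operatorname{Tr}(\rho_{P_m}^{\psi}\rho_{P_m}^{\varphi})$ is bilinear in the two reduced states, and the states are independent, I will write
\begin{equation*}
\mathbb{E}\!\left[\kappa_{P_m}(\psi,\varphi)\right]
=
\operatorname{Tr}\!\left(\mathbb{E}\!\left[\rho_{P_m}^{\psi}\right]\mathbb{E}\!\left[\rho_{P_m}^{\varphi}\right]\right),
\end{equation*}
which reduces the problem to computing the average reduced state of a single Haar-random pure state.

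Next I invoke the same first-moment identity already used for Eq.~\eqref{eq:global_haar_mean}, namely $\mathbb{E}[\ket{\psi}\bra{\psi}]=I_{2^d}/2^d$ for a Haar-random $\ket{\psi}\in(\mathbb{C}^2)^{\otimes d}$. This follows from unitary invariance and Schur's lemma, since the average commutes with every unitary and has unit trace. The partial trace is linear and continuous, so it commutes with the expectation, giving
\begin{equation*}
\mathbb{E}\!\left[\rho_{P_m}^{\psi}\right]
=
\operatorname{Tr}_{P_m^{c}}\!\left(\frac{I_{2^d}}{2^d}\right)
=
\frac{2^{d-k}}{2^d}\,I_{2^k}
=
\frac{I_{2^k}}{2^k}.
\end{equation*}
The same holds for $\varphi$. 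Substituting both into the trace gives $\operatorname{Tr}(I_{2^k}/2^{2k})=2^{k}/2^{2k}=2^{-k}$, which is Eq.~\eqref{eq:patch_haar_mean}. As a consistency check, the case $k=d$ recovers Eq.~\eqref{eq:global_haar_mean}, because for pure states $\operatorname{Tr}(\ket{\psi}\bra{\psi}\ket{\varphi}\bra{\varphi})=|\langle\psi|\varphi\rangle|^2$.

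I expect little genuine difficulty. The only step needing care is the justification for factorizing the expectation. The map $(\rho,\sigma)\mapsto\operatorname{Tr}(\rho\sigma)$ is bilinear, and $\rho_{P_m}^{\psi}$ and $\rho_{P_m}^{\varphi}$ are functions of independent random variables. Expanding in a matrix basis therefore shows that the expectation of each product of entries factorizes. It is worth stressing that no second-moment (Weingarten) computation is required. That would only be needed for $\mathbb{E}[\operatorname{Tr}((\rho_{P_m}^{\psi})^2)]$, the diagonal purity, which is not part of this statement.
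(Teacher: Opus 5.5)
Your proposal is correct and follows essentially the same route as the paper: the Haar first-moment identity $\mathbb{E}[\ket{\varphi}\bra{\varphi}]=I_{2^d}/2^d$, a partial trace giving $I_{2^k}/2^k$, and independence to split the expectation. The only difference is minor. The paper averages over $\varphi$ alone and then uses $\operatorname{Tr}(\rho_{P_m}^{\psi})=1$, so the result holds even for an arbitrary fixed $\psi$; you instead average both reduced states.
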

A proof based on the Haar first-moment identity and the independence of the two states is provided in Appendix~\ref{app:proofs}.

\begin{coro}[Ratio of characteristic mean scales]
\label{coro:patch_global_ratio}
Combining Eq.~\eqref{eq:global_haar_mean} with Theorem~\ref{thm:patch-concen},
\begin{equation}
    \frac{
    \mathbb{E}\!\left[
    \kappa_{P_m}(\psi,\varphi)
    \right]
    }{
    \mathbb{E}\!\left[
    k_{\mathrm{fid}}(\psi,\varphi)
    \right]
    }
    =
    2^{d-k}.
\label{eq:patch_global_ratio}
\end{equation}
Thus, within the Haar-random reference model, replacing the global fidelity by a patch-wise Hilbert--Schmidt similarity changes the mean raw similarity scale from $2^{-d}$ to $2^{-k}$. For $k\ll d$, the mean patch similarity is larger than the mean global fidelity by a factor $2^{d-k}$.
\end{coro}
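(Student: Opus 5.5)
The statement immediately above is Corollary~\ref{coro:patch_global_ratio}, so the plan is simply to combine two mean values already in hand. First I will take the denominator from Eq.~\eqref{eq:global_haar_mean}, $\mathbb{E}[k_{\mathrm{fid}}(\psi,\varphi)]=2^{-d}$. Then I will take the numerator from Theorem~\ref{thm:patch-concen}, $\mathbb{E}[\kappa_{P_m}(\psi,\varphi)]=2^{-k}$. Both expectations are over the same pair of independent Haar-random states $\ket{\psi},\ket{\varphi}\in(\mathbb{C}^2)^{\otimes d}$, so the ratio is well defined and the denominator is nonzero. Dividing gives $2^{-k}/2^{-d}=2^{d-k}$, which is Eq.~\eqref{eq:patch_global_ratio}.

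For the interpretive sentence, I will note that $k\le d$ because $P_m\subseteq\mathcal{Q}_d$, so the factor $2^{d-k}$ is at least $1$. When $k\ll d$ it grows exponentially in the number of traced-out qubits $d-k$.

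The only substantive ingredient is Theorem~\ref{thm:patch-concen}, which I may assume. For completeness, this is how I would justify it if needed:
\begin{itemize}
\item By independence, $\mathbb{E}\,\mathrm{Tr}(\rho^\psi_{P_m}\rho^\varphi_{P_m})=\mathrm{Tr}\big(\mathbb{E}[\rho^\psi_{P_m}]\,\mathbb{E}[\rho^\varphi_{P_m}]\big)$, using linearity of the trace and of the partial trace.
\item The first-moment identity $\mathbb{E}\ket{\psi}\bra{\psi}=I/2^d$ gives $\mathbb{E}[\rho_{P_m}]=\mathrm{Tr}_{P_m^c}(I/2^d)=2^{d-k}I_{2^k}/2^d=I_{2^k}/2^k$.
\item Hence the mean is $\mathrm{Tr}(I_{2^k})/2^{2k}=2^{-k}$.
\end{itemize}
The same route with $P_m=\mathcal{Q}_d$ recovers Eq.~\eqref{eq:global_haar_mean}, since $\kappa$ then reduces to the fidelity for pure states. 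This is a useful consistency check.

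The main point needing care is not computational. The corollary compares \emph{means} of raw, unnormalized similarities, so I will stress that it says nothing about variance or concentration of the normalized kernel, in line with the paper's caveats.
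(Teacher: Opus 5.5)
Your proposal is correct and takes the same approach as the paper: the corollary is simply the quotient $2^{-k}/2^{-d}=2^{d-k}$ of Eq.~\eqref{eq:global_haar_mean} and Theorem~\ref{thm:patch-concen}. Your optional sketch of Theorem~\ref{thm:patch-concen} differs only cosmetically from the appendix proof, which averages over $\varphi$ alone and then uses $\operatorname{Tr}(\rho_{P_m}^{\psi})=1$, rather than factoring both expectations.
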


\subsection{Local kernel mean scale}

The previous result establishes the mean raw similarity scale for a single patch. We now consider the convex aggregation of patch-wise similarities defined in Eq.~\eqref{eq:local_kernel}. For two pure states $\ket{\psi}$ and $\ket{\varphi}$, we write $k_{\mathrm{loc}}(\psi,\varphi)$ for the corresponding state-level local kernel obtained by aggregating the patch similarities $\kappa_{P_m}(\psi,\varphi)$ introduced in the previous subsection.

\begin{theorem}[Local kernel mean scale]
\label{thm:local_concen}
Let $\mathcal{P}=\{P_1,\dots,P_M\}$ be a collection of patches of equal size $|P_m|=k$, and let $k_{\mathrm{loc}}$ be the local kernel defined by Eq.~\eqref{eq:local_kernel}, with fixed weights $w_m\geq 0$ satisfying $\sum_{m=1}^{M}w_m=1$. For two independent Haar-random pure states
$\ket{\psi},\ket{\varphi}\in(\mathbb{C}^{2})^{\otimes d}$,
\begin{equation}
    \mathbb{E}\!\left[
    k_{\mathrm{loc}}(\psi,\varphi)
    \right]
    =
    2^{-k}.
\label{eq:local_haar_mean}
\end{equation}
\end{theorem}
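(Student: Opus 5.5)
The statement follows from linearity of expectation combined with Theorem~\ref{thm:patch-concen}. Since the weights $w_m$ are fixed, meaning they do not depend on $\ket{\psi}$ or $\ket{\varphi}$, and the sum in Eq.~\eqref{eq:local_kernel} is finite, we have
\begin{equation*}
\mathbb{E}\!\left[k_{\mathrm{loc}}(\psi,\varphi)\right]
=\sum_{m=1}^{M} w_m\,\mathbb{E}\!\left[\kappa_{P_m}(\psi,\varphi)\right].
\end{equation*}
Each patch $P_m$ has size $k$, so Theorem~\ref{thm:patch-concen} gives $\mathbb{E}[\kappa_{P_m}(\psi,\varphi)]=2^{-k}$ for every $m$. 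Therefore the right-hand side equals $2^{-k}\sum_m w_m = 2^{-k}$, using the normalization $\sum_m w_m=1$.

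For completeness, I would briefly justify Theorem~\ref{thm:patch-concen} in the same spirit, or cite the appendix. Because $\ket{\psi}$ and $\ket{\varphi}$ are independent and the map $(\rho,\sigma)\mapsto \operatorname{Tr}(\rho\sigma)$ is bilinear, $\mathbb{E}[\operatorname{Tr}(\rho^\psi_{P_m}\rho^\varphi_{P_m})]=\operatorname{Tr}(\mathbb{E}[\rho^\psi_{P_m}]\,\mathbb{E}[\rho^\varphi_{P_m}])$. The Haar first moment gives $\mathbb{E}\ket{\psi}\bra{\psi}=I/2^d$, and the partial trace is linear, so $\mathbb{E}[\rho^\psi_{P_m}]=I_{2^k}/2^k$. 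Hence the expectation is $\operatorname{Tr}(I_{2^k})/4^k=2^{-k}$.

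There is no genuine obstacle here. The only points needing care are that the weights are deterministic, so they can be pulled out of the expectation, and that every patch has the same size $k$. The patches may overlap without affecting the argument, because linearity of expectation requires no independence between the patch kernels $\kappa_{P_m}$. Integrability is immediate since $0\le\kappa_{P_m}\le 1$.
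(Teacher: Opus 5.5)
Your proposal is correct and follows the paper's proof: you pull the fixed weights out by linearity of expectation, apply Theorem~\ref{thm:patch-concen} to each size-$k$ patch, and use $\sum_m w_m=1$. Your sketch of the patch result via independence and bilinearity is equivalent to the appendix argument, which averages over $\varphi$ first and then uses $\operatorname{Tr}\rho^{\psi}_{P_m}=1$.
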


A proof based on Theorem~\ref{thm:patch-concen} and linearity of expectation is provided in Appendix~\ref{app:proofs}.

Theorem~\ref{thm:local_concen} shows that convex aggregation preserves the patch-wise mean raw similarity scale established in Theorem~\ref{thm:patch-concen}. When all patches have size $k$, the mean local similarity remains $2^{-k}$ rather than returning to the global-fidelity scale $2^{-d}$. This result concerns the mean raw similarity and does not by itself provide an additional concentration bound.

\subsection{Multi-scale kernel mean scale}

Different patch sizes probe different subsystem resolutions. Smaller patches emphasize local information and, within the Haar-random reference model, are associated with larger mean raw similarity scales. Larger patches access broader subsystems but approach the smaller global-fidelity scale as their size increases. Multi-scale kernels combine these subsystem resolutions rather than relying on a single patch size.

For each scale $s\in\{1,\dots,S\}$, let
\begin{equation}
    \mathcal{P}^{(s)}
    =
    \left\{
    P_1^{(s)},P_2^{(s)},\dots,P_{M_s}^{(s)}
    \right\}
\end{equation}
be a collection of $M_s$ patches satisfying $|P_m^{(s)}|=k_s$ for every $m$. For two pure states $\ket{\psi}$ and $\ket{\varphi}$, the corresponding scale-wise kernel is
\begin{equation}
    k^{(s)}(\psi,\varphi)
    =
    \frac{1}{M_s}
    \sum_{m=1}^{M_s}
    \kappa_{P_m^{(s)}}(\psi,\varphi).
\label{eq:theory_scale_kernel}
\end{equation}
This gives the explicit state-level form of the scale kernel $k^{(s)}$ appearing in the multi-scale construction defined in Eq.~\eqref{eq:multiscale_kernel}. When $k_s=d$, the Hilbert--Schmidt similarity on the full system coincides with the fidelity for pure states.

\begin{lemma}[Scale-wise mean scale]
\label{lem:scale_concen}
For two independent Haar-random pure states
$\ket{\psi},\ket{\varphi}\in(\mathbb{C}^{2})^{\otimes d}$,
the scale-wise kernel in Eq.~\eqref{eq:theory_scale_kernel} satisfies
\begin{equation}
    \mathbb{E}\!\left[
    k^{(s)}(\psi,\varphi)
    \right]
    =
    2^{-k_s}.
\label{eq:scale_haar_mean}
\end{equation}
\end{lemma}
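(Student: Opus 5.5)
The plan is to reduce Lemma~\ref{lem:scale_concen} to Theorem~\ref{thm:patch-concen} (equivalently, to Theorem~\ref{thm:local_concen}) using linearity of expectation. Fix a scale $s$. The scale-wise kernel in Eq.~\eqref{eq:theory_scale_kernel} is the local kernel of Eq.~\eqref{eq:local_kernel} for the collection $\mathcal{P}^{(s)}$ with uniform weights $w_m=1/M_s$. These weights are nonnegative, deterministic (independent of $\psi,\varphi$), and sum to one. All patches in $\mathcal{P}^{(s)}$ have the common size $k_s$. So the hypotheses of Theorem~\ref{thm:local_concen} hold with $k=k_s$, and the claim follows immediately.

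To keep the argument self-contained, I would also write out the one-line computation. Each $\kappa_{P_m^{(s)}}(\psi,\varphi)$ is bounded in $[0,1]$, since it is the trace of a product of two density matrices. It is therefore integrable, and expectation commutes with the finite sum:
\begin{equation*}
\mathbb{E}\!\left[k^{(s)}(\psi,\varphi)\right]
=\frac{1}{M_s}\sum_{m=1}^{M_s}\mathbb{E}\!\left[\kappa_{P_m^{(s)}}(\psi,\varphi)\right]
=\frac{1}{M_s}\sum_{m=1}^{M_s}2^{-k_s}=2^{-k_s}.
\end{equation*}
The middle equality applies Theorem~\ref{thm:patch-concen} to each patch of size $k_s$. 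That theorem holds for any subset of qubits, since Haar measure is invariant under qubit permutations, so the location of the patch does not matter.

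I do not expect a real obstacle here. The only point needing care is the edge case $k_s=d$. There $P^c$ is empty, the reduced states are the full pure states, and $\kappa$ reduces to $|\langle\psi|\varphi\rangle|^2$, whose mean is $2^{-d}$ by Eq.~\eqref{eq:global_haar_mean}. This agrees with the formula, so Theorem~\ref{thm:patch-concen} should be read as covering $k=d$, or that case can be handled separately.

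If the underlying patch computation needs to be recalled, I would argue as follows. Independence gives $\mathbb{E}[\rho_P^\psi]=\operatorname{Tr}_{P^c}(\mathbb{E}|\psi\rangle\langle\psi|)=\operatorname{Tr}_{P^c}(I/2^d)=I/2^{k}$, and likewise for $\varphi$. Then
\begin{equation*}
\mathbb{E}\,\operatorname{Tr}(\rho_P^\psi\rho_P^\varphi)=\operatorname{Tr}\!\left(\mathbb{E}[\rho_P^\psi]\,\mathbb{E}[\rho_P^\varphi]\right)=2^{-k}.
\end{equation*}
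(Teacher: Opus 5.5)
Your proposal is correct and matches the paper's argument: the paper also derives the lemma directly from Theorem~\ref{thm:patch-concen} by linearity of expectation over the $M_s$ patches of common size $k_s$ with uniform weights $1/M_s$. Your extra remarks are accurate refinements of the same route and do not change the approach: the boundedness of $\kappa$, the $k_s=d$ edge case, and the recalled patch computation.
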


Lemma~\ref{lem:scale_concen} follows directly from Theorem~\ref{thm:patch-concen} and linearity of expectation, since every patch at scale $s$ has size $k_s$.

\begin{theorem}[Multi-scale kernel mean scale]
\label{thm:multiscale_concen}
Let $k_{\mathrm{ms}}$ be the multi-scale kernel defined in Eq.~\eqref{eq:multiscale_kernel}, with fixed weights $\alpha_s\geq0$ satisfying $\sum_{s=1}^{S}\alpha_s=1$. For two independent Haar-random pure states $\ket{\psi},\ket{\varphi}\in(\mathbb{C}^{2})^{\otimes d}$,
\begin{equation}
    \mathbb{E}\!\left[
    k_{\mathrm{ms}}(\psi,\varphi)
    \right]
    =
    \sum_{s=1}^{S}
    \alpha_s\,2^{-k_s}.
\label{eq:multiscale_haar_mean}
\end{equation}
\end{theorem}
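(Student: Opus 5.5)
The plan is to reduce the statement to Lemma~\ref{lem:scale_concen} using only linearity of expectation. First, write the state-level multi-scale kernel explicitly by substituting the scale-wise kernels of Eq.~\eqref{eq:theory_scale_kernel} into Eq.~\eqref{eq:multiscale_kernel}:
\begin{equation*}
k_{\mathrm{ms}}(\psi,\varphi)=\sum_{s=1}^{S}\alpha_s\,\frac{1}{M_s}\sum_{m=1}^{M_s}\kappa_{P_m^{(s)}}(\psi,\varphi).
\end{equation*}
This is a finite linear combination of the patch similarities $\kappa_{P_m^{(s)}}(\psi,\varphi)$ with deterministic coefficients. The weights $\alpha_s$ and the patch collections $\mathcal{P}^{(s)}$ are fixed and do not depend on $\ket{\psi},\ket{\varphi}$.

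Next, I will check integrability so that expectation can be exchanged with the finite sum. Each $\kappa_{P}(\psi,\varphi)=\operatorname{Tr}(\rho_P^{\psi}\rho_P^{\varphi})$ lies in $[0,1]$, because both reduced states are positive semidefinite with unit trace. Each term is therefore a bounded measurable function of the pair of independent Haar-random states, and its expectation is finite.

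Then, by linearity,
\begin{equation*}
\mathbb{E}\!\left[k_{\mathrm{ms}}(\psi,\varphi)\right]=\sum_{s=1}^{S}\alpha_s\,\mathbb{E}\!\left[k^{(s)}(\psi,\varphi)\right].
\end{equation*}
Applying Lemma~\ref{lem:scale_concen} to each scale, with every patch at scale $s$ of size $k_s$, replaces each inner expectation by $2^{-k_s}$, which yields Eq.~\eqref{eq:multiscale_haar_mean}. Equivalently, one can apply Theorem~\ref{thm:patch-concen} directly to every patch term and then use $\frac{1}{M_s}\sum_{m=1}^{M_s}2^{-k_s}=2^{-k_s}$.

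There is essentially no hard step here. The only point needing care is the degenerate case $k_s=d$, where the patch is the whole system and the partial trace over the empty complement is trivial. In that case $\kappa=|\langle\psi|\varphi\rangle|^2$, and the claim must agree with Eq.~\eqref{eq:global_haar_mean}. It does, since Theorem~\ref{thm:patch-concen} with $k=d$ gives $2^{-d}$. I would add a remark that the result is consistent with this limit.
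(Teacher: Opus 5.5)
Your proposal is correct and matches the paper's proof. Both write $k_{\mathrm{ms}}$ as a fixed convex combination of the scale-wise kernels, pass the expectation through the finite sum by linearity, and invoke Lemma~\ref{lem:scale_concen} to replace each term by $2^{-k_s}$; your boundedness check and your remark on the $k_s=d$ case are harmless additions the paper leaves implicit.
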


A proof based on Lemma~\ref{lem:scale_concen} and linearity of expectation is provided in Appendix~\ref{app:proofs}.

\begin{coro}[Convexity bound]
\label{coro:bound}
Under the assumptions of Theorem~\ref{thm:multiscale_concen}, let $k_{\min}=\min_s k_s$ and $k_{\max}=\max_s k_s$. Then
\begin{equation}
    2^{-k_{\max}}
    \leq
    \mathbb{E}\!\left[
    k_{\mathrm{ms}}(\psi,\varphi)
    \right]
    \leq
    2^{-k_{\min}}.
\label{eq:multiscale_convexity_bound}
\end{equation}
\end{coro}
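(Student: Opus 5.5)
The bound follows directly from Theorem~\ref{thm:multiscale_concen}. It expresses $\mathbb{E}[k_{\mathrm{ms}}(\psi,\varphi)]$ as the convex combination $\sum_{s=1}^{S}\alpha_s 2^{-k_s}$, with weights $\alpha_s\geq 0$ and $\sum_s\alpha_s=1$. The plan is to bound each term $2^{-k_s}$ above and below using only the monotonicity of $t\mapsto 2^{-t}$, and then sum with the nonnegative weights.

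First, since $k_{\min}\leq k_s\leq k_{\max}$ for every $s$ and $t\mapsto 2^{-t}$ is strictly decreasing on the reals, we get
\begin{equation}
2^{-k_{\max}}\leq 2^{-k_s}\leq 2^{-k_{\min}},\qquad s=1,\dots,S.
\end{equation}
Second, we multiply each inequality by $\alpha_s\geq 0$, which preserves the direction of the inequalities, and sum over $s$. Third, we use $\sum_s\alpha_s=1$ to collapse the outer sums to $2^{-k_{\max}}$ and $2^{-k_{\min}}$. Substituting Eq.~\eqref{eq:multiscale_haar_mean} for the middle sum then yields Eq.~\eqref{eq:multiscale_convexity_bound}.

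No step is a real obstacle; the argument is just the statement that a convex combination lies between the minimum and maximum of the combined values. The only points to check are bookkeeping. The weights must be nonnegative, so that multiplying preserves the inequalities. They must sum to exactly one, so that the extremal bounds are not rescaled. These are precisely the hypotheses inherited from Theorem~\ref{thm:multiscale_concen}.

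Optionally, one can note when each bound is attained. The lower bound is attained exactly when all the weight sits on scales with $k_s=k_{\max}$, and the upper bound exactly when it sits on scales with $k_s=k_{\min}$. If $k_{\max}=d$, the lower bound coincides with the global fidelity scale of Eq.~\eqref{eq:global_haar_mean}.
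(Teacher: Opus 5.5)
Your proof is correct and matches the paper's argument, which observes that $\mathbb{E}[k_{\mathrm{ms}}]=\sum_s\alpha_s 2^{-k_s}$ is a convex combination of the values $2^{-k_s}$ and so lies between their minimum $2^{-k_{\max}}$ and maximum $2^{-k_{\min}}$. You spell out the monotonicity and weighting steps explicitly, and your remarks on when each bound is attained are also correct.
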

The result follows because $\mathbb{E}[k_{\mathrm{ms}}]$ is a convex combination of the values $\{2^{-k_s}\}_{s=1}^{S}$.

Theorem~\ref{thm:multiscale_concen} and Corollary~\ref{coro:bound} show that the mean raw similarity scale of the multi-scale kernel is a convex combination of the scales associated with its constituent patch sizes. It therefore lies between the scales corresponding to the smallest and largest patches and can incorporate contributions from multiple subsystem resolutions. These results concern mean raw similarities and do not by themselves determine the variance, similarity distribution, kernel spectrum, label alignment, or predictive performance.

\section{Methods}\label{methods}

\subsection{Kernel definitions}\label{sec:kernel_definitions}

Throughout, we represent a classical input as a feature vector $\mathbf{x}\in\mathbb{R}^d$. For a dataset $\{\mathbf{x}_i\}_{i=1}^n$, a kernel function $k(\cdot,\cdot)$ induces a Gram matrix $K\in\mathbb{R}^{n\times n}$ with entries $K_{ij}=k(\mathbf{x}_i,\mathbf{x}_j)$ as in Eq.~\eqref{eq:gram_def}. The corresponding encoded quantum state $\ket{\psi(\mathbf{x})}$ is prepared by applying the feature-map circuit $U_{\phi}(\mathbf{x})$ as in Eq.~\eqref{eq:feature_state}.

\subsubsection{Baseline (global fidelity) kernel}

The baseline kernel is the global fidelity kernel defined in Eq.~\eqref{eq:fidelity_kernel}. Operationally, we compute it by first preparing the statevector for each sample,
\begin{equation}
\mathbf{s}_i \equiv \ket{\psi(\mathbf{x}_i)} \in \mathbb{C}^{2^d},
\label{eq:baseline_statevector_def}
\end{equation}
then forming the overlap matrix
\begin{equation}
G_{ij} = \langle \psi(\mathbf{x}_i) \mid \psi(\mathbf{x}_j) \rangle,
\qquad
G = S S^{\dagger},
\label{eq:baseline_overlap_def}
\end{equation}
where $S\in\mathbb{C}^{n\times 2^d}$ is the matrix whose $i$th row is $\mathbf{s}_i^{\dagger}$ (equivalently, $S=[\mathbf{s}_1^{\dagger};\dots;\mathbf{s}_n^{\dagger}]$). Finally, the Gram matrix is given entrywise by
\begin{equation}
K^{\mathrm{fid}}_{ij} = \left|G_{ij}\right|^2.
\label{eq:baseline_gram_def}
\end{equation}

In our implementation we symmetrize $K^{\mathrm{fid}}$ by replacing it with $(K^{\mathrm{fid}}+(K^{\mathrm{fid}})^{\top})/2$ and set the diagonal entries to $1$. The symmetrization enforces the exact symmetry that holds in the ideal definition, but that may be violated slightly by finite-precision arithmetic; this improves numerical stability for downstream operations such as spectral decompositions. Setting $K^{\mathrm{fid}}_{ii}=1$ enforces the exact self-similarity $|\langle\psi(\mathbf{x}_i)\mid\psi(\mathbf{x}_i)\rangle|^2=1$, which can otherwise deviate from unity at the level of floating-point error.

\subsubsection{Local (patch-wise) kernels}

The motivation for locality is that exponential concentration of the global fidelity kernel is driven by a single overlap of $d$-qubit states, which can become nearly constant (and typically small) for most distinct inputs as circuit expressivity grows. By instead evaluating similarities on small subsystems and aggregating them, local kernels probe overlaps in smaller effective Hilbert spaces and are less sensitive to global scrambling. As a result, the aggregated Gram matrix can retain richer off-diagonal variation than the baseline fidelity kernel while still defining a valid similarity measure.

Let $\mathcal{P}=\{P_1,\dots,P_M\}$ be a collection of patches, where each patch $P_m\subseteq\mathcal{Q}_d$ indexes a subset of qubits. When not specified otherwise, we use the default disjoint partition into adjacent pairs $P_m=\{2m-2,2m-1\}$ for $m=1,\dots,d/2$. When specified partitions are provided, we set $\mathcal{P}$ equal to that collection of index sets and apply the same patch-kernel construction and aggregation without further modification. For completeness, Algorithm~\ref{alg:local_kernel} in Appendix~\ref{app:local_kernel_algo} summarizes the end-to-end procedure we use to construct the local (patch-wise) kernel.

We support two implementations for computing patch-wise similarities (followed by a common aggregation step).

\paragraph{Subcircuit-based patch kernel.}
For each patch $P_m$, we construct a feature-map circuit restricted to the qubits in $P_m$ and use the corresponding feature subvector $\mathbf{x}_{P_m}$ obtained by selecting the components of $\mathbf{x}$ indexed by $P_m$. Denoting the resulting patch state by $\ket{\psi_{P_m}(\mathbf{x})}$, the patch kernel is
\begin{equation}
 k^{(P_m)}_{\mathrm{sub}}(\mathbf{x},\mathbf{x}') = \left|\langle\psi_{P_m}(\mathbf{x})\mid\psi_{P_m}(\mathbf{x}')\rangle\right|^2.
 \label{eq:local_subcircuit_kernel}
\end{equation}

\paragraph{Reduced Density Matrix (RDM) local kernel.}
Alternatively, we prepare the full $d$-qubit state $\ket{\psi(\mathbf{x})}$ and compute the reduced density matrix on a patch by tracing out the complement,
\begin{equation}
 \rho_{P_m}(\mathbf{x}) = \operatorname{Tr}_{P_m^{c}}\!\left(\ket{\psi(\mathbf{x})}\bra{\psi(\mathbf{x})}\right).
 \label{eq:local_rdm_def}
\end{equation}
Here, $P_m^{c}=\mathcal{Q}_d\setminus P_m$. We then define the patch similarity using either the quantum state fidelity $F\!\left(\rho_{P_m}(\mathbf{x}),\rho_{P_m}(\mathbf{x}')\right)$ or the Hilbert--Schmidt inner product $\operatorname{Tr}\!\left(\rho_{P_m}(\mathbf{x})\,\rho_{P_m}(\mathbf{x}')\right)$.

\paragraph{Patch aggregation.}
Patch kernels are aggregated by either an unweighted mean or a weighted mean. In the unweighted case, the arithmetic mean corresponds to using uniform weights $w_m=1/M$. With weights $w_m\ge 0$ and $\sum_{m=1}^M w_m=1$, the aggregated local kernel is
\begin{equation}
 k_{\mathrm{loc}}(\mathbf{x},\mathbf{x}') = \sum_{m=1}^{M} w_m\,k^{(P_m)}(\mathbf{x},\mathbf{x}').
 \label{eq:local_agg_kernel}
\end{equation}

\subsubsection{Multi-scale kernels}

A multi-scale kernel combines kernels computed at multiple granularities. The motivation is that relevant structure may appear at different subsystem sizes: small patches can capture short-range correlations, while larger patches can capture more global similarity. By mixing multiple patch granularities, the kernel can reduce reliance on a single highly concentrated global overlap while retaining informative similarities at intermediate scales.

We specify a collection of scales $\{\mathcal{P}^{(s)}\}_{s=1}^{S}$, where each scale $\mathcal{P}^{(s)}$ is a collection of patches. For each scale, we compute the scale-wise kernel $k^{(s)}$ by averaging the corresponding patch similarities. The final multi-scale kernel is the convex combination defined in Eq.~\eqref{eq:multiscale_kernel}.

Before constructing this combination, the nonnegative scale weights $\{\alpha_s\}$ are renormalized to sum to one. Unless specified otherwise, we use uniform weights, $\alpha_s=1/S$; in particular, the default two-scale construction uses $\alpha_1=\alpha_2=1/2$. When scales are not specified, the default construction consists of (i) disjoint adjacent pairs and (ii) the full system.

In our implementation, we compute one kernel per scale by averaging patch contributions, then combine the per-scale kernels with nonnegative weights that are renormalized to sum to one. For patches smaller than the full system we use the Hilbert--Schmidt inner product between patch reduced density matrices, while for the full-system patch we use the state fidelity. Finally, unless otherwise specified, we normalize the resulting multi-scale Gram matrix to have unit diagonal and enforce exact symmetry.
For completeness, Algorithm~\ref{alg:multiscale_kernel} in Appendix~\ref{app:multiscale_kernel_algo} summarizes the end-to-end procedure we use to construct the multi-scale kernel.

\subsubsection{Normalization and PSD correction}

Kernel matrices computed from finite-precision statevector arithmetic can exhibit small asymmetries and, in some cases, tiny negative eigenvalues that are numerical artifacts rather than physical effects. To make kernels comparable across feature dimensions and constructions, we enforce a common \emph{unit-diagonal normalization}. For any kernel matrix $K$ with strictly positive diagonal entries, we apply
\begin{equation}
 K_{ij} \leftarrow \frac{K_{ij}}{\sqrt{K_{ii}K_{jj}}},
 \label{eq:kernel_unit_diag}
\end{equation}
which rescales the kernel so that $K_{ii}=1$ for all samples.

In addition, we explicitly symmetrize the numerically computed Gram matrices via
$K\leftarrow (K+K^{\top})/2$.

\paragraph{Local kernel.}
For the local (patch-wise) kernel, we further apply a positive-semidefinite (PSD) correction after aggregation by projecting onto the PSD cone using eigenvalue clipping. Concretely, for the eigendecomposition $K=V\,\mathrm{diag}(\lambda)\,V^{\top}$, we set negative eigenvalues below a small threshold to zero and reconstruct $K$. We then reapply the unit-diagonal normalization.

\paragraph{Baseline and multi-scale kernels.}
For the baseline fidelity kernel and the multi-scale kernel, our implementation enforces symmetry and a unit diagonal but does not perform an explicit PSD projection; in practice, any observed PSD violations in these constructions are typically at the level of numerical noise.

\subsection{Feature maps and implementation}

\paragraph{ZZ-style feature maps.}
All quantum kernels in this work are instantiated using ZZ-style data-encoding circuits that map an input feature vector $\mathbf{x}\in\mathbb{R}^{d}$ to a $d$-qubit state $\ket{\psi(\mathbf{x})}$. We implement three closely related variants (selected via a unified API): (i) a simple \emph{manual} ZZ-style map (\texttt{zz\_manual}) based on single-qubit $R_x$ rotations followed by $CZ$ entangling gates; (ii) a more \emph{canonical-like manual} ZZ map (\texttt{zz\_manual\_canonical}) based on an initial layer of Hadamards followed by repeated layers of local $R_z$ rotations and pairwise $R_{ZZ}$ interactions; and (iii) Qiskit's circuit-library implementation (\texttt{zz\_qiskit}) based on \texttt{qiskit.circuit.library.zz\_feature\_map}.

\paragraph{Depth and entanglement patterns.}
We denote the number of repetitions (layers) of the feature map by $L$ (\texttt{depth} in the code). Unless otherwise specified, we use shallow feature maps (typically $L=1$) to enable matched sweeps over feature dimension $d\in\{4,6,\dots,20\}$. For the manual feature maps, we use \texttt{linear} and \texttt{ring} entanglement patterns (nearest-neighbor couplings, with \texttt{ring} additionally coupling the last and first qubits). For the Qiskit feature map, we use entanglement patterns supported by the library implementation (we use \texttt{linear} in our main sweeps).

\paragraph{Statevector backend.}
All kernel matrices reported in this paper are computed using exact statevector simulation in Qiskit \cite{qiskit}. Concretely, each kernel implementation constructs the feature-map circuit for each sample and evaluates state overlaps (baseline and full-patch contributions) via inner products of statevectors. Local and multi-scale kernels compute patch reduced density matrices using partial traces on the statevector and then evaluate patch similarities.

\paragraph{Kernel post-processing: symmetry, normalization, and centering.}
Across kernel families we apply consistent numerical post-processing. First, we explicitly enforce symmetry via $K\leftarrow (K+K^{\top})/2$. Second, we apply unit-diagonal normalization as in Eq.~\eqref{eq:kernel_unit_diag}. For the local kernel only, we additionally apply eigenvalue clipping to enforce positive semidefiniteness, followed by a second unit-diagonal normalization.

Some experiments and diagnostics use \emph{centered} kernels. Given a Gram matrix $K\in\mathbb{R}^{n\times n}$, we center it as $K_c = HKH$ with $H=I-\frac{1}{n}\mathbf{1}\mathbf{1}^{\top}$. Unless otherwise specified, we report results for uncentered kernels and compute centered alignment with labels as a diagnostic.

\subsection{Datasets and preprocessing}

\paragraph{Datasets.}
We evaluate on a mixture of synthetic and real-world datasets, using a common pipeline that produces matched train/validation/test splits and precomputed-kernel SVM evaluation. The main experiments use \texttt{breast\_cancer} (via \texttt{sklearn.datasets.load\_breast\_cancer}) \cite{sklearn_breast_cancer}, \texttt{parkinsons} (via \texttt{sklearn.datasets.fetch\_openml} with \texttt{data\_id=1488}) \cite{openml_1488_parkinsons}, \texttt{ionosphere} \cite{kaggle_ionosphere}, \texttt{heart\_disease} \cite{uci_heart_disease}, and subset runs for larger datasets such as \texttt{exam\_score\_prediction} \cite{kaggle_exam_score_prediction} and \texttt{star\_classification} \cite{kaggle_star_classification}.

\paragraph{Scaling and angle encoding.}
All datasets are standardized prior to quantum embedding. After scaling, each feature is treated as a rotation angle and passed directly to the selected ZZ-style feature map, so that the feature dimension equals the number of qubits ($d$).

\paragraph{Target feature dimensions $d\in\{4,6,\dots,20\}$.}
To study concentration as a function of system size, we evaluate each dataset at a common set of feature dimensions $d\in\{4,6,\dots,20\}$. When a dataset has more than $d$ available features, we reduce to the target dimension using PCA.
When a dataset has fewer raw features than required for the sweep, we augment it with simple engineered features based on pairwise interactions (products) to reach the desired dimension while keeping the protocol consistent across kernels.

\paragraph{Splits and random seeds.}
For each dataset and target dimension $d$, we evaluate multiple deterministic train/validation/test splits generated from a fixed grid of random seeds. For each seed, the same split, preprocessing, and hyperparameter grid are used across all kernel families. Results aggregated across seeds report the mean, with variability indicated by one standard deviation where shown. All split indices and labels are saved alongside each kernel matrix for reproducibility.

\subsection{Scalability: Nyström approximation (optional)}

Computing full kernel matrices scales quadratically in the number of samples ($O(n^2)$) and can become a bottleneck for larger datasets. As an optional scalability mechanism, our codebase supports a Nystr\"om/landmark approximation, which approximates the full Gram matrix using a smaller set of $m\ll n$ landmark points \cite{williams2001nystrom}.

\paragraph{Approximation.}
Let $X=\{\mathbf{x}_i\}_{i=1}^{n}$ be the full dataset and let $Z=\{\mathbf{z}_j\}_{j=1}^{m}\subset X$ denote a
subset of landmark points. Define the cross-kernel matrix $C\in\mathbb{R}^{n\times m}$ with entries $C_{ij}=k(\mathbf{x}_i,\mathbf{z}_j)$ and the landmark Gram matrix $W\in\mathbb{R}^{m\times m}$ with entries $W_{ij}=k(\mathbf{z}_i,\mathbf{z}_j)$. The Nystr\"om approximation of the full kernel is then
\begin{equation}
\tilde{K} = C\,W^{\dagger}\,C^{\top},
\label{eq:nystrom_kernel}
\end{equation}
where $W^{\dagger}$ denotes the (pseudo-)inverse of $W$. Equivalently, one can form an explicit feature representation
$\Phi = C\,(W^{\dagger})^{1/2}$ and train a linear model on $\Phi$.

\paragraph{Implementation details.}
We implement Nystr\"om support by providing cross-kernel computation routines for each kernel family. These routines compute the cross-kernel $C=K(X,Z)$ between a set of samples and a set of landmark points, optionally using chunking to avoid large intermediate allocations, and they reuse the same patch/scale logic as the corresponding full-kernel construction. On the evaluation side, our SVM utility supports both (i) precomputed-kernel SVMs for full kernels and (ii) linear SVMs on explicit feature matrices for Nystr\"om-style approximations.

\paragraph{Why we include it.}
Nystr\"om provides a standard and lightweight route to scaling kernel experiments while keeping the kernel construction tied to the same underlying similarity function. In this project it primarily serves as infrastructure for future experiments on larger datasets; unless noted otherwise, the results in this paper use the exact (non-Nystr\"om) statevector kernels.

\subsection{Diagnostics and evaluation}

To quantify how strongly a kernel concentrates and how much information it retains for downstream learning, we compute three geometry diagnostics computed from the Gram matrix $K\in\mathbb{R}^{n\times n}$, together with an SVM evaluation using precomputed kernels.

\paragraph{Off-diagonal concentration (p50/p95).}
Let $\mathcal{I}=\{(i,j): i\neq j\}$ denote the off-diagonal index set and let $\{K_{ij}\}_{(i,j)\in\mathcal{I}}$ be the multiset of off-diagonal entries. We summarize concentration using the median (p50) and the upper-tail percentile (p95),
\begin{equation}
\begin{split}
\mathrm{p50}(K) = \mathrm{percentile}_{50}\!\left(\{K_{ij}\}_{i\neq j}\right),
\\
\mathrm{p95}(K) = \mathrm{percentile}_{95}\!\left(\{K_{ij}\}_{i\neq j}\right).
\end{split}
\label{eq:offdiag_percentiles}
\end{equation}
As $d$ (or circuit expressivity) increases, exponential concentration manifests as off-diagonal entries shrinking toward zero; correspondingly, both p50 and p95 decrease and the kernel approaches the identity.

\paragraph{Effective rank (entropy-based).}
To characterize spectral richness, we compute the eigenvalues of the symmetrized kernel $K_s = \tfrac{1}{2}(K+K^{\top})$ and denote them by $\{\lambda_i\}_{i=1}^{n}$. We clip small negative eigenvalues (numerical artifacts) to $0$ and form a probability distribution $p_i = \lambda_i / \sum_j \lambda_j$. The entropy-based effective rank is then \cite{roy2007effective}
\begin{equation}
 r_{\mathrm{eff}}(K)
 = \exp\!\left(-\sum_{i=1}^{n} p_i\,\log p_i\right).
 \label{eq:effective_rank_entropy}
\end{equation}
Here, terms with $p_i=0$ are defined by the convention
$0\log 0=0$. A large $r_{\mathrm{eff}}$ indicates a flatter normalized eigenvalue distribution, whereas a small $r_{\mathrm{eff}}$ indicates that spectral mass is concentrated in fewer dominant eigenvalues. Effective rank is a spectral summary and, by itself, does not measure kernel quality or label relevance; in particular, the identity matrix has maximal effective rank.

\paragraph{Centered alignment with labels.}
We compute the centered kernel alignment between $K$ and a label kernel $L$ \cite{cristianini2002kernel}. Let $H = I - \tfrac{1}{n}\mathbf{1}\mathbf{1}^{\top}$ be the centering matrix, and define centered kernels $K_c = HKH$ and $L_c = HLH$.
For multi-class labels $y\in\{1,\dots,C\}^n$, we build $L$ from one-hot encodings $Y\in\mathbb{R}^{n\times C}$ via $L = YY^{\top}$. The centered alignment is
\begin{equation}
\mathcal{A}(K,y)
= \frac{\langle K_c, L_c\rangle_F}{\lVert K_c\rVert_F\,\lVert L_c\rVert_F},
\qquad
\langle A,B\rangle_F = \operatorname{Tr}(A^{\top}B).
\label{eq:centered_alignment}
\end{equation}
This statistic measures how well the similarity structure in $K$ matches label similarity after removing global mean effects.

\paragraph{SVM protocol and hyperparameter grid.}
For each dataset (and each target dimension $d$), we use fixed train/validation/test splits that are shared across kernel families. Given a full Gram matrix $K$, we train an SVM with a precomputed kernel using $K_{\mathrm{train}} = K[\mathrm{train},\mathrm{train}]$. For model selection, we sweep the regularization parameter $C$ over the grid
\begin{equation}
 \mathcal{C} = \{0.1,\; 1,\; 10\},
 \label{eq:svm_C_grid}
\end{equation}
choose the value maximizing validation accuracy, and then report test accuracy using the selected $C$ on
$K_{\mathrm{test}} = K[\mathrm{test},\mathrm{train}]$. All kernels are evaluated under the same splits and the same $\mathcal{C}$ grid to enable matched comparisons.

\section{Results and analysis}\label{results}

\subsection{Kernel concentration and effective rank vs.~feature dimension}
\label{sec:results_geometry_vs_d}

We first study how kernel geometry changes as the feature dimension $d$ increases. For each dataset we run a matched sweep over $d\in\{4,6,\dots,20\}$ and compute (i) off-diagonal concentration statistics (median p50 and upper-tail p95) and (ii) spectral diversity via the entropy-based effective rank $r_{\mathrm{eff}}$ defined in Eq.~\eqref{eq:effective_rank_entropy}. All kernels are normalized to unit diagonal, so changes in p50/p95 directly reflect how much similarity mass remains in the off-diagonals as $d$ grows. Unless otherwise noted, all kernel families are evaluated using matched train/validation/test splits for each seed, following the protocol described in Section~\ref{methods}, with aggregation over the fixed seed grid specified in the corresponding figure captions.

\paragraph{Off-diagonal concentration (p50).}
Figure~\ref{fig:offdiag_p50_vs_d} shows the off-diagonal median (p50) as a function of $d$. Across all datasets, the baseline (global fidelity) kernel exhibits the strongest concentration signature: as $d$ increases, the p50 curve rapidly decreases toward $0$, consistent with the Gram matrix approaching the identity. In contrast, the local kernel maintains substantially higher p50 values over the same sweep, indicating that patch-wise aggregation retains nontrivial similarity structure at larger $d$. As expected, the multi-scale kernel typically interpolates between the baseline and local curves because it explicitly mixes global and local similarity. In most cases the baseline--local gap widens with $d$, highlighting that locality preserves meaningful off-diagonal similarity even in regimes where the global fidelity kernel is close to an identity matrix.

\begin{figure*}[t]
\centering
\includegraphics[width=0.88\linewidth]{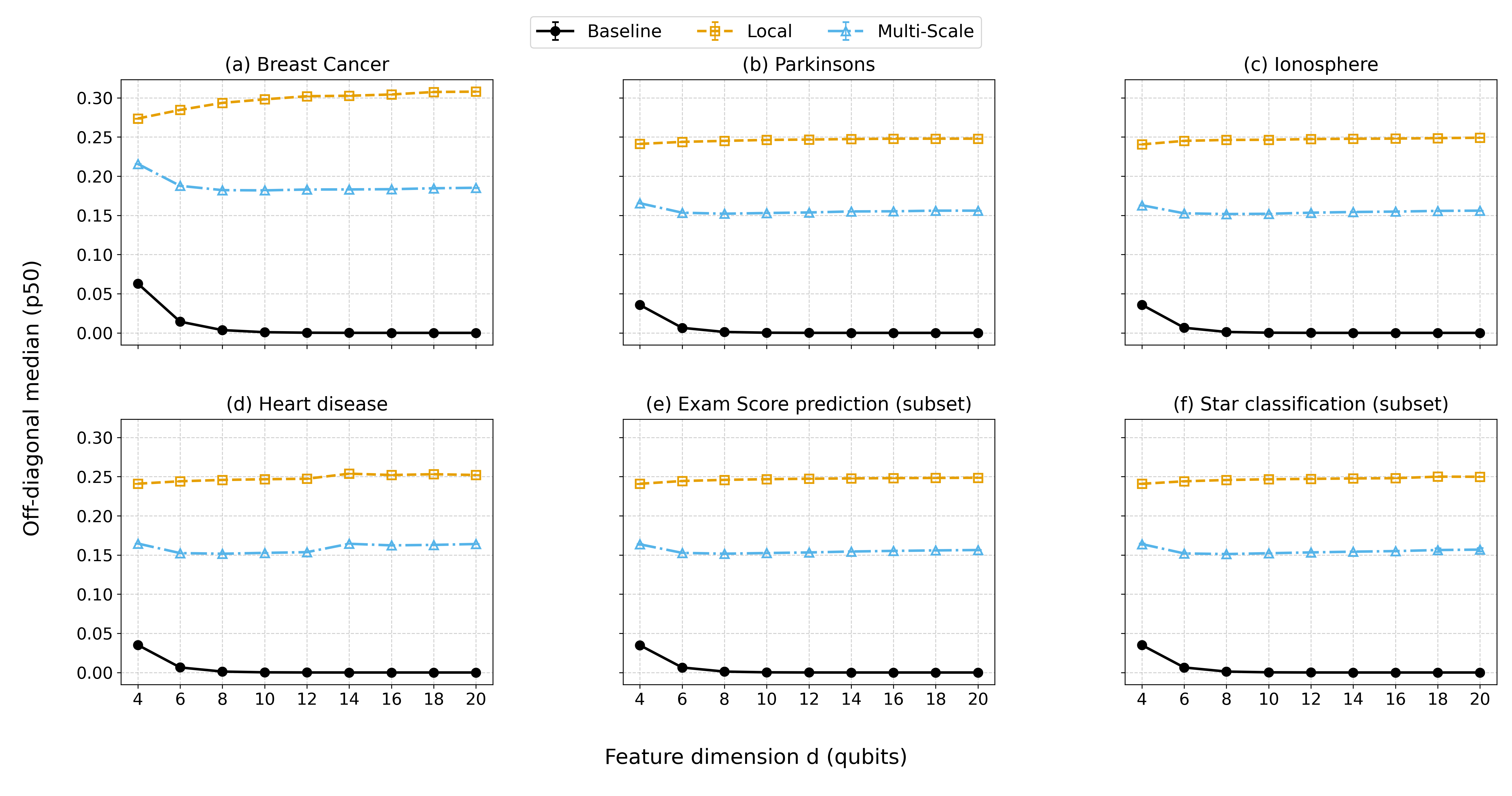}
\vspace{-5mm}
\caption{Median off-diagonal similarity (p50) vs.\ feature dimension $d$ for all datasets. The p50 values of the global fidelity baseline decrease rapidly toward zero as $d$ increases, whereas the local kernel retains substantially larger off-diagonal similarities and the multi-scale kernel is typically intermediate. All Gram matrices are normalized to unit diagonal.}
\label{fig:offdiag_p50_vs_d}
\end{figure*}

\paragraph{Upper-tail behavior (p95).}
The p95 statistic (Appendix~\ref{app:supplementary_figures}, Fig.~\ref{fig:offdiag_p95_vs_d}) probes whether a small subset of pairs remains highly similar even when the median similarity collapses. We again observe a consistent ordering: baseline concentrates fastest, local concentrates slowest, and multi-scale is intermediate. Operationally, this suggests that locality mitigates not only the typical pairwise overlap (p50) but also the collapse of the similarity tail (p95), thereby preserving more heterogeneous pairwise structure.

\paragraph{Effective rank.}
Figure~\ref{fig:effrank_vs_d} reports the entropy-based effective rank of the kernel spectrum. As $d$ increases, the baseline kernel consistently exhibits the highest $r_{\mathrm{eff}}$, reflecting a flatter normalized eigenvalue distribution. In contrast, the local kernel exhibits a lower $r_{\mathrm{eff}}$, whereas the multi-scale kernel tends to lie between the baseline and local constructions. This ordering differs from that observed for the off-diagonal statistics, showing that effective rank and off-diagonal concentration capture complementary aspects of the kernel geometry. In this setting, a larger $r_{\mathrm{eff}}$ indicates that spectral mass is distributed more uniformly across the eigenvalue spectrum, while a smaller $r_{\mathrm{eff}}$ indicates that spectral mass is concentrated in fewer dominant eigenvalues. Although $r_{\mathrm{eff}}$ is a coarse summary, it captures spectral diversity rather than kernel quality.

\paragraph{Representative eigen-spectra (qualitative).}
In addition to the effective-rank summary, Appendix~\ref{app:supplementary_figures} (Fig.~\ref{fig:spectra_examples_d12}) provides representative eigen-spectrum examples at a fixed feature dimension ($d=12$). A consistent qualitative signature is that the baseline kernel exhibits a flatter eigenvalue spectrum, whereas the local kernel allocates a larger fraction of the spectral mass to the leading eigenvalues, with the multi-scale kernel exhibiting intermediate behavior. This pattern is consistent with the effective-rank values reported in Fig.~\ref{fig:effrank_vs_d} and further illustrates that effective rank characterizes spectral diversity rather than kernel concentration.

\begin{figure*}[t]
\centering
\includegraphics[width=0.88\linewidth]{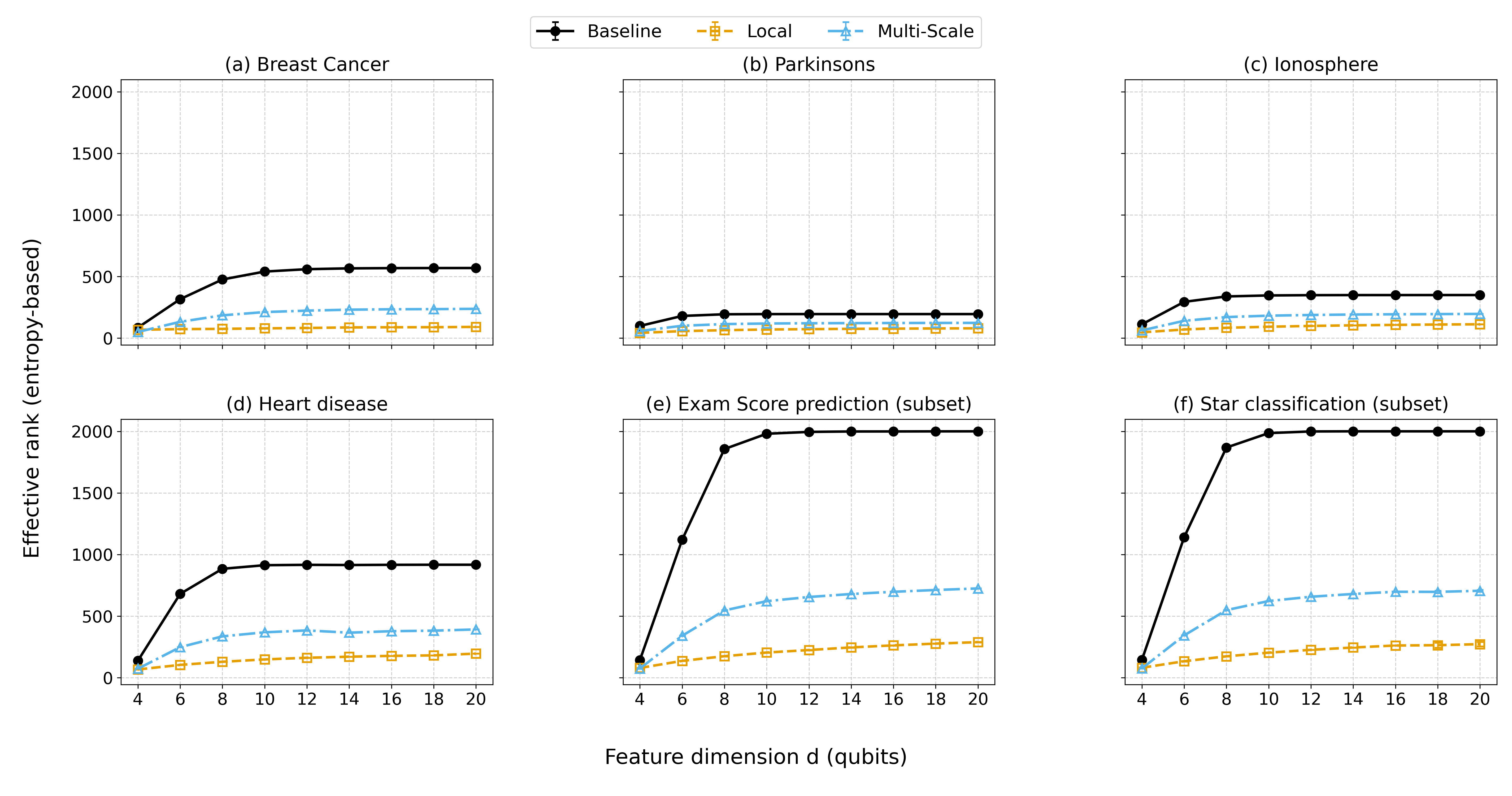}
\vspace{-5mm}
\caption{Entropy-based effective rank $r_{\mathrm{eff}}(K)$ vs.\ feature dimension $d$. The global fidelity baseline exhibits the highest effective rank, corresponding to a flatter normalized eigenvalue distribution. The local kernel generally exhibits lower values, while the multi-scale kernel typically lies between the baseline and local constructions.}
\label{fig:effrank_vs_d}
\end{figure*}

\subsection{SVM performance}
\label{sec:results_svm}

We next evaluate whether the geometric changes induced by local and multi-scale kernels translate into improved downstream classification. For each dataset and dimension $d$, we train an SVM with a precomputed kernel on the training split. The regularization parameter $C$ is selected by validation accuracy from the fixed grid $\mathcal{C}=\{0.1,1,10\}$ (Eq.~\eqref{eq:svm_C_grid}) independently for each kernel family; we then report the corresponding test accuracy.
Figure~\ref{fig:bestC_vs_d} in Appendix~\ref{app:supplementary_figures} summarizes the validation-selected $C$ values as a function of $d$. This protocol isolates the effect of the kernel while keeping model selection simple and reproducible.

Figure~\ref{fig:testacc_vs_d} shows test accuracy as a function of $d$. Across datasets, we observe that reducing concentration (Section~\ref{sec:results_geometry_vs_d}) does not necessarily imply improved accuracy: local and multi-scale kernels can match or exceed the baseline on some datasets and dimensions, while remaining comparable or worse on others. This highlights that concentration is a useful diagnostic for kernel geometry, but not a sufficient criterion for predictive performance without additional choices (e.g., patch design, feature map, or depth). Mechanistically, locality and multi-scale mixing change the inductive bias of the similarity measure: while they can mitigate global-state overlap collapse, they can also discard global correlations (or dilute them via averaging) that may be important for a given dataset.

\begin{figure*}[t]
\centering
\includegraphics[width=0.88\linewidth]{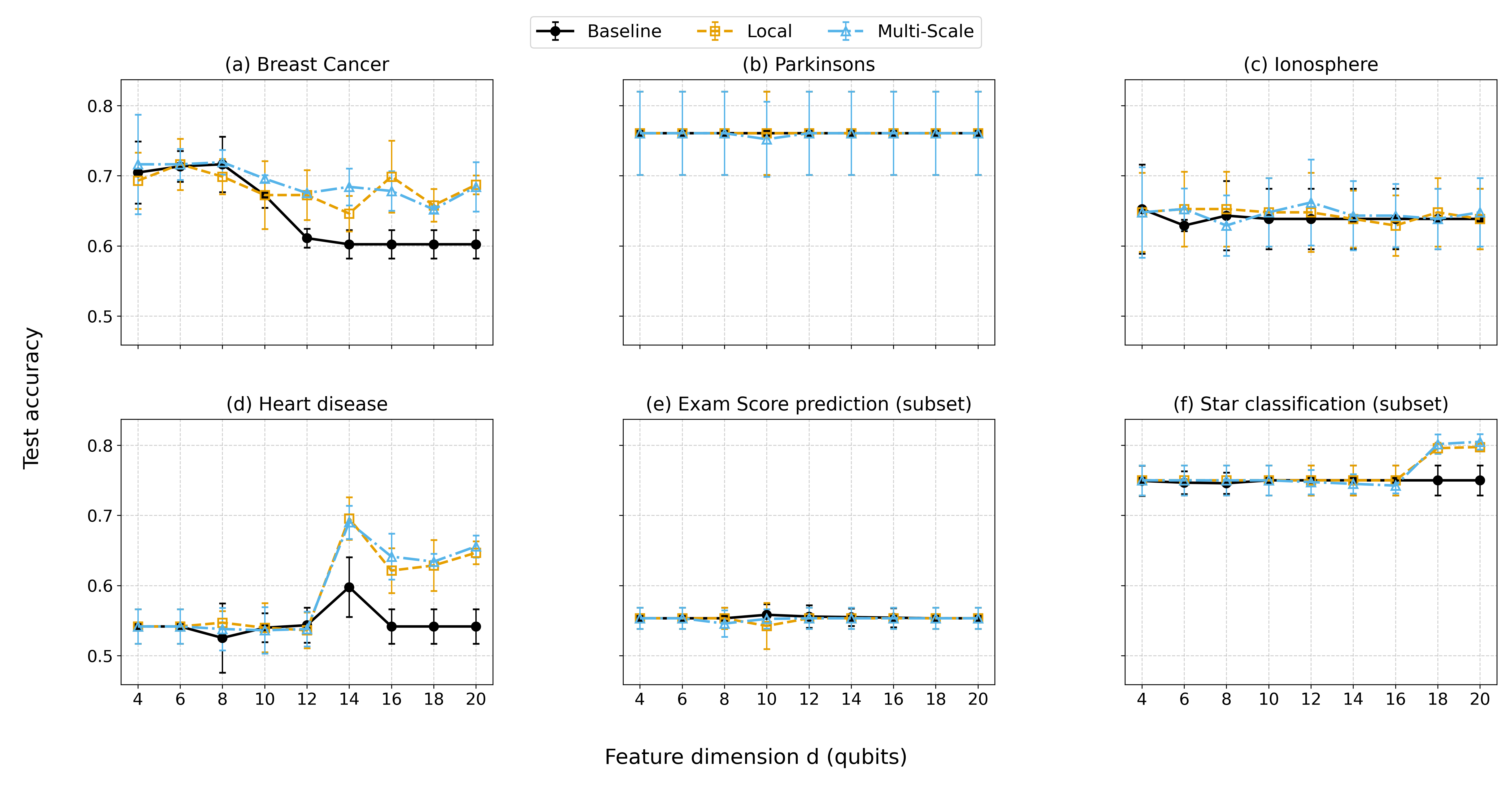}
\vspace{-5mm}
\caption{SVM test accuracy vs.\ feature dimension $d$. Points show the mean test accuracy across the fixed seed grid, and error bars show the standard deviation. For each dataset, dimension, seed, and kernel family, the SVM regularization parameter $C$ is selected by validation accuracy from the fixed grid in Eq.~\eqref{eq:svm_C_grid}. The selected value is then used for one evaluation on the corresponding test split. Differences between the baseline, local, and multi-scale kernels are dataset- and dimension-dependent.}
\label{fig:testacc_vs_d}
\end{figure*}

\subsection{Tradeoffs and discussion}
\label{sec:results_tradeoffs}

The results above show a consistent geometric effect (local and multi-scale constructions mitigate concentration relative to the baseline fidelity kernel), but the performance impact is more nuanced. In particular, reduced concentration does \emph{not} imply higher accuracy in a dataset-independent way.

\paragraph{Reduced concentration does not necessarily imply higher accuracy.}
A kernel that is less concentrated (higher off-diagonal p50/p95) can still fail to improve test accuracy if the preserved variation is not aligned with the label structure. This highlights that kernel concentration and effective rank capture different geometric properties. Conversely, a more concentrated kernel can remain competitive when the classification task is simple under the chosen encoding, or when regularization compensates for limited similarity variation.

\paragraph{Alignment vs.\ accuracy.}
To bridge geometry and performance, we compute the centered alignment statistic $\mathcal{A}(K,y)$ (Eq.~\eqref{eq:centered_alignment}). While alignment is not a direct proxy for SVM accuracy, it provides a complementary diagnostic: kernels that increase off-diagonal mass but do not improve (or even degrade) alignment can reshape geometry without adding label-relevant structure. Intuitively, centered alignment measures whether variations in $K$ co-vary with label similarity after removing global mean effects; thus, it is sensitive to whether the additional variability retained by local or multi-scale constructions is plausibly task-relevant rather than merely ``less concentrated.''

\paragraph{Computational tradeoffs.}
Local and multi-scale kernels reduce concentration by replacing a single global overlap with multiple subsystem-level comparisons, but this typically increases computational overhead. In particular, RDM-based patch kernels require additional partial traces (one per patch and sample in the simplest implementation), and multi-scale kernels further multiply this cost across scales; these tradeoffs motivate the optional Nystr\"om approximation (Section~\ref{methods}) when scaling to larger datasets. 

\paragraph{Global view: test-accuracy deltas.}
Figure~\ref{fig:delta_testacc_heatmaps} summarizes the test-accuracy change relative to the baseline kernel as
a function of dataset and $d$. This view makes clear that gains from local and multi-scale kernels are heterogeneous across datasets and across dimensions.

\begin{figure*}[t]
\centering
\includegraphics[width=0.88\linewidth]{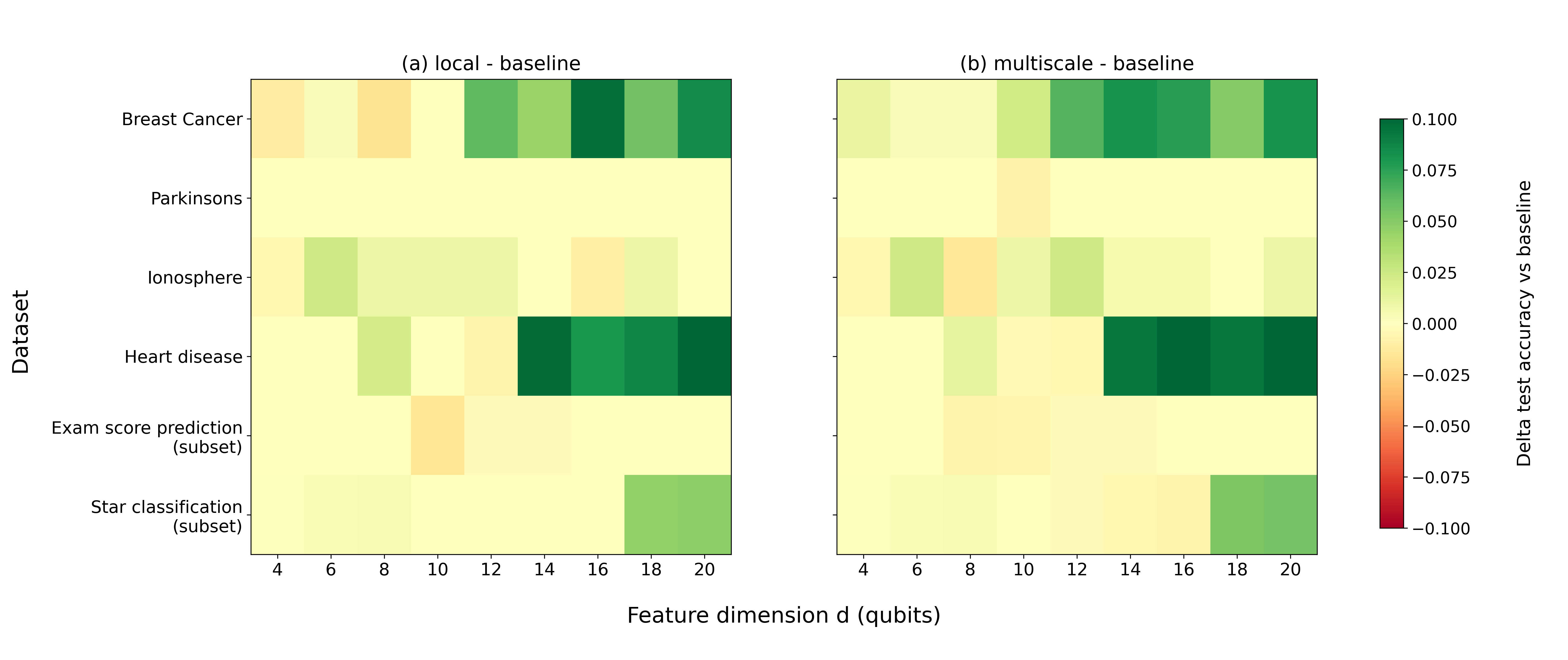}
\vspace{-5mm}
\caption{Heatmaps of test-accuracy differences relative to the
baseline kernel across datasets (rows) and feature dimensions $d$ (columns): (a) local minus baseline and (b) multi-scale minus baseline. Positive values indicate higher test accuracy than the baseline, whereas negative values indicate lower test accuracy.}
\label{fig:delta_testacc_heatmaps}
\end{figure*}

\paragraph{Per-dataset deltas across $d$.}
To connect the global heatmap view with per-dataset trends, Appendix~\ref{app:supplementary_figures} (Fig.~\ref{fig:delta_testacc_vs_d}) shows the test accuracy delta relative to baseline as a function of $d$.

\paragraph{Tradeoff plots: concentration vs.\ accuracy.}
Finally, Appendix~\ref{app:supplementary_figures} (Fig.~\ref{fig:tradeoff_scatter}) visualizes the empirical relationship between kernel concentration (off-diagonal p50) and test accuracy. Across datasets, these scatter plots illustrate that higher p50 (less concentration) can correlate with accuracy in some cases, but the relationship is not universal. Figure~\ref{fig:bar_mean_delta_test_acc} summarizes the mean test-accuracy delta relative to baseline for each dataset, averaged across $d\in\{4,6,\dots,20\}$.

\begin{figure}[t]
\centering
\includegraphics[width=0.9\linewidth]{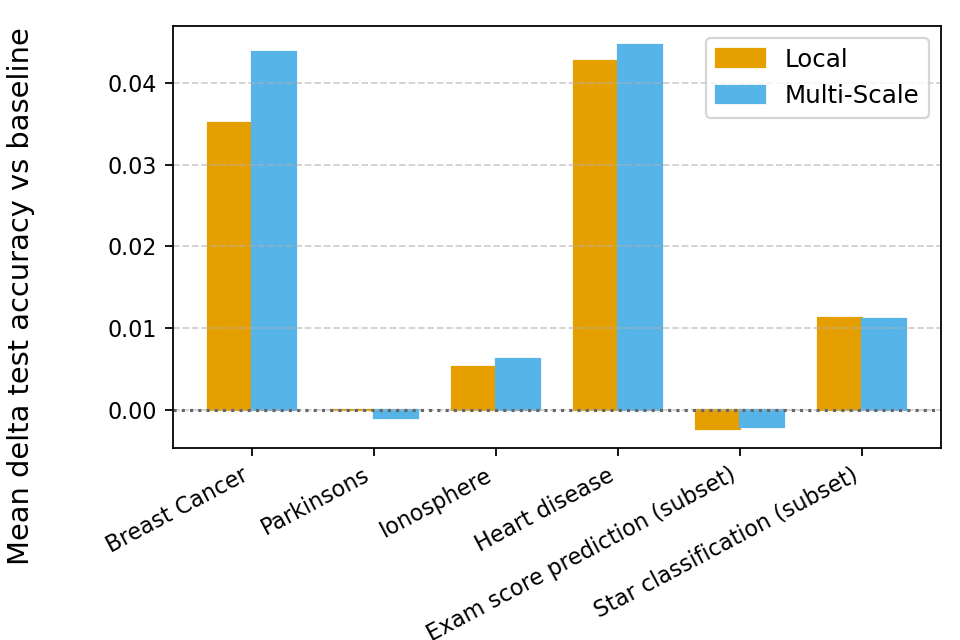}
\vspace{-3mm}
\caption{Mean test-accuracy difference relative to the baseline kernel for each dataset, averaged over feature dimensions $d\in\{4,6,\dots,20\}$. Positive values indicate that the local or multi-scale kernel improves accuracy on average, while negative values indicate a decrease relative to baseline.}
\label{fig:bar_mean_delta_test_acc}
\end{figure}

\section{Conclusion and outlook}\label{sec:conclusion}

We investigated two practical strategies designed to mitigate off-diagonal concentration in fidelity-based quantum kernels: local (patch-wise) constructions that aggregate subsystem similarities and multi-scale constructions that combine kernels across patch granularities. Within the Haar-random reference model, the mean raw global fidelity scales as $2^{-d}$, whereas the Hilbert--Schmidt similarity on a $k$-qubit patch has mean $2^{-k}$. Convex local aggregation preserves this patch-dependent mean scale, while multi-scale aggregation produces a weighted combination of the scales associated with its constituent patch sizes. These identities provide an analytical reference for the dimensional effect of locality. They concern mean unnormalized similarities, are not concentration bounds, and do not assume that the data-encoded states used in the experiments are Haar-random.

Across six tabular datasets and a common sweep in feature dimension $d\in\{4,6,\dots,20\}$, the local and multi-scale constructions consistently retained larger off-diagonal p50 and p95 values than the global fidelity baseline. The local kernel exhibited the strongest reduction in off-diagonal concentration, while the multi-scale kernel generally showed intermediate behavior. Effective rank followed a different pattern: the baseline consistently exhibited the highest effective rank, the local kernel generally exhibited a lower value, and the multi-scale kernel tended to lie between them. This distinction shows that effective rank measures spectral diversity and should not be interpreted as a direct proxy for off-diagonal concentration or kernel quality.

The downstream impact on SVM test accuracy was heterogeneous. Reduced concentration improved predictive performance for some datasets and dimensions but did not provide a dataset-independent guarantee. Preserving larger off-diagonal similarities is therefore not sufficient on its own; the retained variation must also be aligned with the label structure. Centered alignment, effective rank, and off-diagonal statistics provide complementary geometric diagnostics, but none of them individually certifies improved predictive performance.

Several directions remain open. Patch design and scale weighting could be selected in a data-driven manner, subject to constraints that preserve positive semidefiniteness. Extending these constructions to shot-based estimation and noisy hardware will require careful treatment of statistical error, noise mitigation, and the computational cost of reduced-state estimation. Locality and scale mixing therefore provide a practical basis for studying how quantum-kernel geometry affects learnability and resource tradeoffs within classical kernel pipelines.

\section{Acknowledgments}\label{acknowledgments}

This work was carried out as part of the Qiskit Advocate Mentorship Program (QAMP 2025). All authors participated as Qiskit Advocates on a voluntary basis. We thank the QAMP organizers for coordinating the program and providing a collaborative environment. This work received no external funding. Any opinions, findings, and conclusions expressed in this manuscript are those of the authors and do not necessarily reflect the views of IBM, Qiskit, or the Qiskit community.

\noindent\textit{Use of AI tools.} OpenAI's ChatGPT was used only for language polishing and to improve the clarity and readability of human-authored text. The authors reviewed and edited all AI-assisted output and take full responsibility for the accuracy, originality, and integrity of the manuscript.

\section{Code availability}

The source code, configuration files, and scripts used to reproduce the experimental results and figures reported in this work are publicly available in the accompanying repository \cite{quantum_kernels_qamp2026}.

\newpage

\bibliographystyle{unsrtnat}
\bibliography{refer}

\clearpage

\appendix

\onecolumngrid

\section{Local (patch-wise) kernel construction}\label{app:local_kernel_algo}

\begin{algorithm}[H]
\caption{Local Quantum Kernel Construction}
\label{alg:local_kernel}
\KwIn{
Dataset $X \in \mathbb{R}^{n \times d}$;
feature map circuit family $U_{\phi}(\cdot)$ with depth $L$;
patch set $\mathcal{P} = \{P_1,\dots,P_M\}$ (optional);
method $\in \{\textsc{Subcircuits}, \textsc{RDM}\}$;
RDM metric $\in \{\textsc{fidelity},\textsc{hs}\}$ (only for \textsc{RDM});
aggregation rule $\textsc{Agg} \in \{\textsc{mean},\textsc{weighted}\}$; weights $\{w_m\}$ required only for $\textsc{weighted}$ aggregation
}
\KwOut{Kernel matrix $K \in \mathbb{R}^{n \times n}$}

\If{$\mathcal{P}$ is not specified}{
Define default disjoint adjacent pairs $P_m=\{2m-2,2m-1\}$ for $m=1,\dots,d/2$
}

\If{$\textsc{Agg} = \textsc{weighted}$}{
Validate $w_m \ge 0$ and renormalize so that $\sum_m w_m = 1$
}

Initialize empty list $\mathcal{G}$\;

\uIf{\textsc{method = Subcircuits}}{
    \For{$m = 1$ \KwTo $M$}{
        Construct the patch feature map $U_{\phi}^{(P_m)}(\cdot)$ acting on $|P_m|$ qubits\;
        \For{$i = 1$ \KwTo $n$}{
            Form the patch feature subvector $\mathbf{x}_{i,P_m}$ by selecting components of $\mathbf{x}_i$ indexed by $P_m$\;
            Prepare the patch state $\ket{\psi_{P_m}(\mathbf{x}_i)} = U_{\phi}^{(P_m)}(\mathbf{x}_{i,P_m})\ket{0}^{\otimes |P_m|}$\;
        }
        Compute patch Gram matrix $K^{(P_m)}$ with entries
        $K^{(P_m)}_{ij} = \left|\langle \psi_{P_m}(\mathbf{x}_i) \mid \psi_{P_m}(\mathbf{x}_j) \rangle\right|^2$\;
        Append $K^{(P_m)}$ to $\mathcal{G}$\;
    }
}
\uElseIf{\textsc{method = RDM}}{
    Construct full feature map $U_{\phi}(\cdot)$ on $d$ qubits\;
    \For{$i = 1$ \KwTo $n$}{
        Prepare full state $\ket{\psi_i} = U_{\phi}(\mathbf{x}_i)\ket{0}^{\otimes d}$\;
    }
    \For{$m = 1$ \KwTo $M$}{
        Set $P_m^{c}\leftarrow\mathcal{Q}_d\setminus P_m$\;
        \For{$i = 1$ \KwTo $n$}{
            Compute $\rho_{P_m}(\mathbf{x}_i) = \operatorname{Tr}_{P_m^{c}}\!\left(\ket{\psi_i}\bra{\psi_i}\right)$\;
        }
        Compute patch Gram matrix $K^{(P_m)}$ with entries
        $
        K^{(P_m)}_{ij} =
        \begin{cases}
        F\!\left(\rho_{P_m}(\mathbf{x}_i),\rho_{P_m}(\mathbf{x}_j)\right), & \textsc{fidelity} \\
        \operatorname{Tr}\!\left(\rho_{P_m}(\mathbf{x}_i)\rho_{P_m}(\mathbf{x}_j)\right), & \textsc{hs}
        \end{cases}
        $\;
        Append $K^{(P_m)}$ to $\mathcal{G}$\;
    }
}

Aggregate patch kernels:
$K \leftarrow \textsc{Agg}\left(\{K^{(P_m)}\}_{m=1}^{M}\right)$\;

Symmetrize $K \leftarrow (K + K^\top)/2$\;
Normalize $K$ to unit diagonal: $K_{ij} \leftarrow \frac{K_{ij}}{\sqrt{K_{ii}K_{jj}}}$\;
Project $K$ onto the PSD cone via eigenvalue clipping\;
Renormalize $K$ to unit diagonal\;

\Return $K$\;
\end{algorithm}

\newpage

\section{Multi-scale kernel construction}\label{app:multiscale_kernel_algo}

\begin{algorithm}[H]
\caption{Multi-Scale Quantum Kernel Construction}
\label{alg:multiscale_kernel}
\KwIn{
Dataset $X \in \mathbb{R}^{n \times d}$;
feature map circuit family $U_{\phi}(\cdot)$ with depth $L$;
scales $\{\mathcal{P}^{(s)}\}_{s=1}^{S}$ (optional);
scale weights $\{\alpha_s\}_{s=1}^{S}$ (optional, nonnegative);
normalize flag (default: true)
}
\KwOut{Kernel matrix $K \in \mathbb{R}^{n \times n}$}

\If{scales are not specified}{
Define default scales: \\
\,\,\,(i) disjoint adjacent pairs $\mathcal{P}^{(1)}=\{(0,1),(2,3),\dots\}$ \\
\,\,\,(ii) full system $\mathcal{P}^{(2)}=\{\mathcal{Q}_d\}$\;
Set $S\leftarrow 2$\;
}

Validate scales (nonempty patches, indices in
$\{0,1,\dots,d-1\}$)\;

\If{scale weights are not specified}{
    Set $\alpha_s\leftarrow 1/S$ for all $s=1,\dots,S$\;
}
\Else{
    Validate $\alpha_s\geq 0$ for all $s$\;
    Validate $\sum_{s=1}^{S}\alpha_s>0$\;
    Renormalize $\alpha_s\leftarrow\alpha_s/\sum_{r=1}^{S}\alpha_r$ for all $s$\;
}

\For{$i = 1$ \KwTo $n$}{
Prepare full state $\ket{\psi_i} = U_{\phi}(\mathbf{x}_i)\ket{0}^{\otimes d}$\;
}

Initialize kernel $K \leftarrow 0_{n \times n}$\;

\ForEach{scale $s = 1,\dots,S$}{
    Initialize per-scale kernel $K^{(s)} \leftarrow 0_{n \times n}$\;
    \For{$m = 1$ \KwTo $M_s$}{
        \uIf{$|P_m^{(s)}| = d$}{
            \For{$i = 1$ \KwTo $n$}{
                \For{$j = i$ \KwTo $n$}{
                    Compute $v\leftarrow\left|\langle \psi_i \mid \psi_j \rangle\right|^2$\;
                    $K^{(s)}_{ij} \mathrel{+}= v$\;
                    \If{$j>i$}{
                        $K^{(s)}_{ji} \mathrel{+}= v$\;
                    }
                }
            }
        }
        \Else{
            Set $(P_m^{(s)})^{c}\leftarrow\mathcal{Q}_d\setminus P_m^{(s)}$\;
            \For{$i = 1$ \KwTo $n$}{
                Compute $\rho_{P_m^{(s)}}(\mathbf{x}_i) = \operatorname{Tr}_{(P_m^{(s)})^{c}}\!\left(\ket{\psi_i}\bra{\psi_i}\right)$\;
            }
            \For{$i = 1$ \KwTo $n$}{
                \For{$j = i$ \KwTo $n$}{
                    Compute $v\leftarrow\operatorname{Tr}\!\left(\rho_{P_m^{(s)}}(\mathbf{x}_i)\rho_{P_m^{(s)}}(\mathbf{x}_j)\right)$\;
                    $K^{(s)}_{ij} \mathrel{+}= v$\;
                    \If{$j>i$}{
                        $K^{(s)}_{ji} \mathrel{+}= v$\;
                    }
                }
            }
        }
    }
    Average over patches: $K^{(s)} \leftarrow \frac{1}{|\mathcal{P}^{(s)}|} K^{(s)}$\;
    Accumulate weighted contribution: $K \leftarrow K + \alpha_s K^{(s)}$\;
}

\If{normalize flag}{
    Validate $K_{ii}>0$ for all $i$\;
    Normalize $K$ to unit diagonal: $K_{ij}\leftarrow\frac{K_{ij}}{\sqrt{K_{ii}K_{jj}}}$\;
}
Symmetrize $K\leftarrow(K+K^\top)/2$\;

\Return $K$\;

\end{algorithm}

\newpage

\section{Additional analysis}\label{app:supplementary_figures}

\noindent The supplementary figures provide additional detail and alternative views of the main results in Section~\ref{results}. They expand on the concentration diagnostics, spectral structure, and performance trends across datasets and feature dimensions.

\subsection{Off-diagonal concentration tail (p95)} 

In Fig.~\ref{fig:offdiag_p95_vs_d}, we compare the off-diagonal concentration tail (p95) as the feature dimension (number of qubits) increases across six datasets. Lower p95 values indicate stronger suppression of off-diagonal correlations, corresponding to faster concentration. The Baseline method exhibits the most rapid decay, approaching zero by moderate dimensions, whereas the Local approach retains the largest off-diagonal values, indicating the slowest concentration. The Multi-Scale method consistently lies between these extremes, retaining larger off-diagonal values than the baseline but smaller values than the Local construction, with the same qualitative ordering across all datasets.

\begin{figure}[tbp]
\centering
\includegraphics[width=0.9\linewidth]{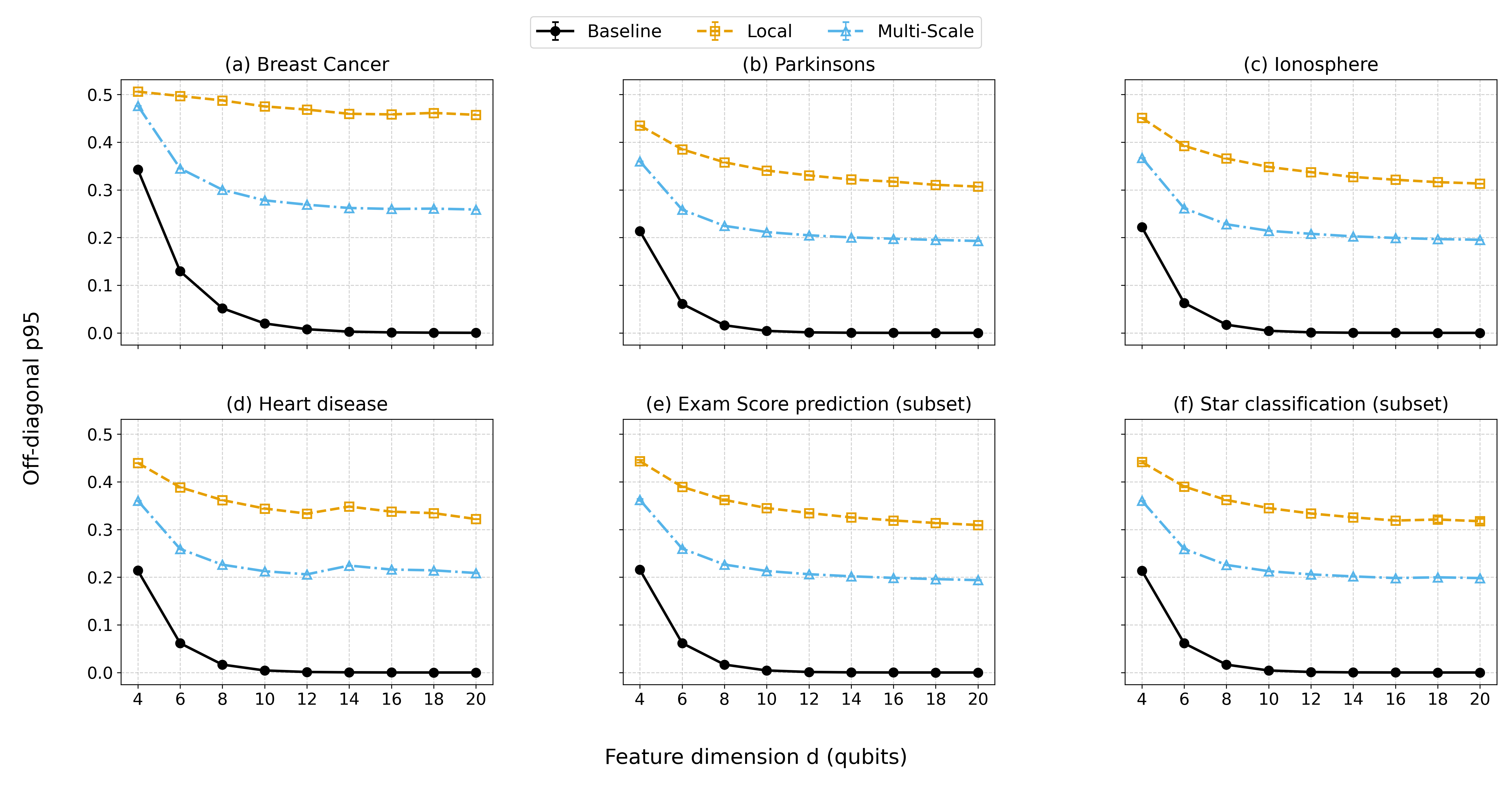}
\vspace{-5mm}
\caption{Off-diagonal concentration tail (p95) vs.\ feature dimension $d$. The p95 values of the global fidelity baseline decrease most rapidly toward zero, the local kernel retains the largest values, and the multi-scale kernel is generally intermediate. All Gram matrices are normalized to unit diagonal.}
\label{fig:offdiag_p95_vs_d}
\end{figure}

\subsection{Per-dataset test-accuracy deltas vs.\ \texorpdfstring{$d$}{d}}

In Fig.~\ref{fig:delta_testacc_vs_d}, we show the change in test accuracy of the Local and Multi-Scale kernels relative to the baseline as the feature dimension increases. Positive values indicate improved performance over the baseline, while negative values denote a slight degradation. The Multi-Scale kernel generally matches or exceeds the Local kernel, particularly at higher dimensions for the Breast Cancer, Heart Disease, and Star Classification datasets. In contrast, Parkinson’s, Ionosphere, and Exam Score Prediction exhibit minimal deviations, indicating that all methods perform comparably. Overall, the results show that the accuracy impact of the Multi-Scale kernel is heterogeneous: it improves performance in some dataset-dimension combinations, is comparable in others, and can decrease performance in some cases.

\begin{figure}[tbp]
\centering
\includegraphics[width=0.9\linewidth]{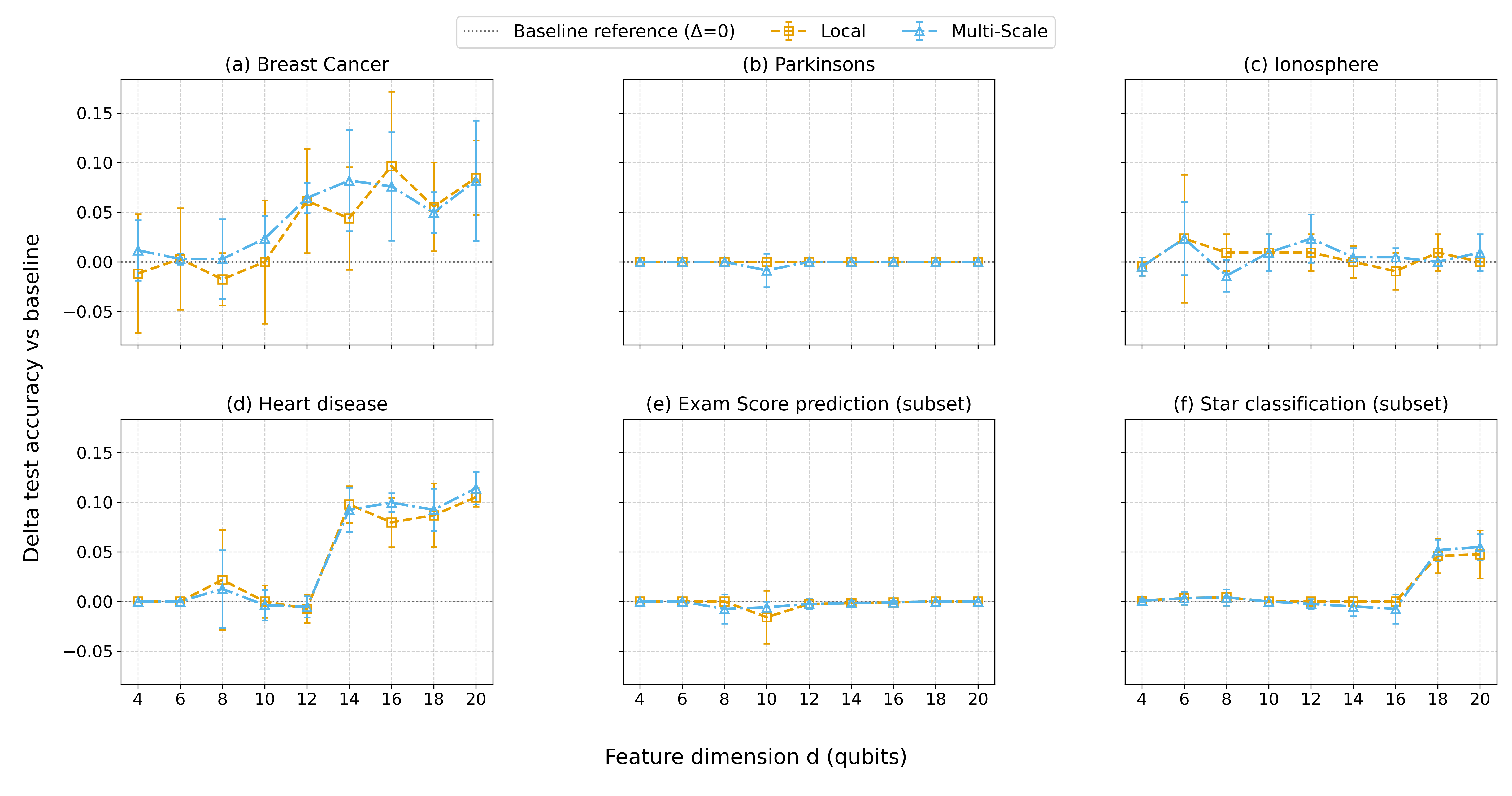}
\vspace{-5mm}
\caption{Test-accuracy difference relative to baseline vs.\ feature dimension $d$. Curves show the mean difference across the fixed seed grid, and error bars show the standard deviation. Positive values indicate higher test accuracy than the baseline, whereas negative values indicate lower test accuracy.}
\label{fig:delta_testacc_vs_d}
\end{figure}

\subsection{Tradeoff scatter: concentration (p50) vs.~test accuracy}

In Fig.~\ref{fig:tradeoff_scatter}, we illustrate the tradeoff between off-diagonal concentration (median, p50) and test accuracy for the Baseline, Local, and Multi-Scale kernels across six datasets. The Baseline exhibits the strongest concentration (lowest p50), while the Local kernel shows the weakest concentration (highest p50). The Multi-Scale kernel consistently occupies an intermediate regime in off-diagonal similarity magnitude. Although reduced concentration is associated with improved accuracy for some datasets, the relationship is not universal, indicating that reduced concentration alone does not guarantee better predictive performance.

\begin{figure}[tbp]
\centering
\includegraphics[width=0.9\linewidth]{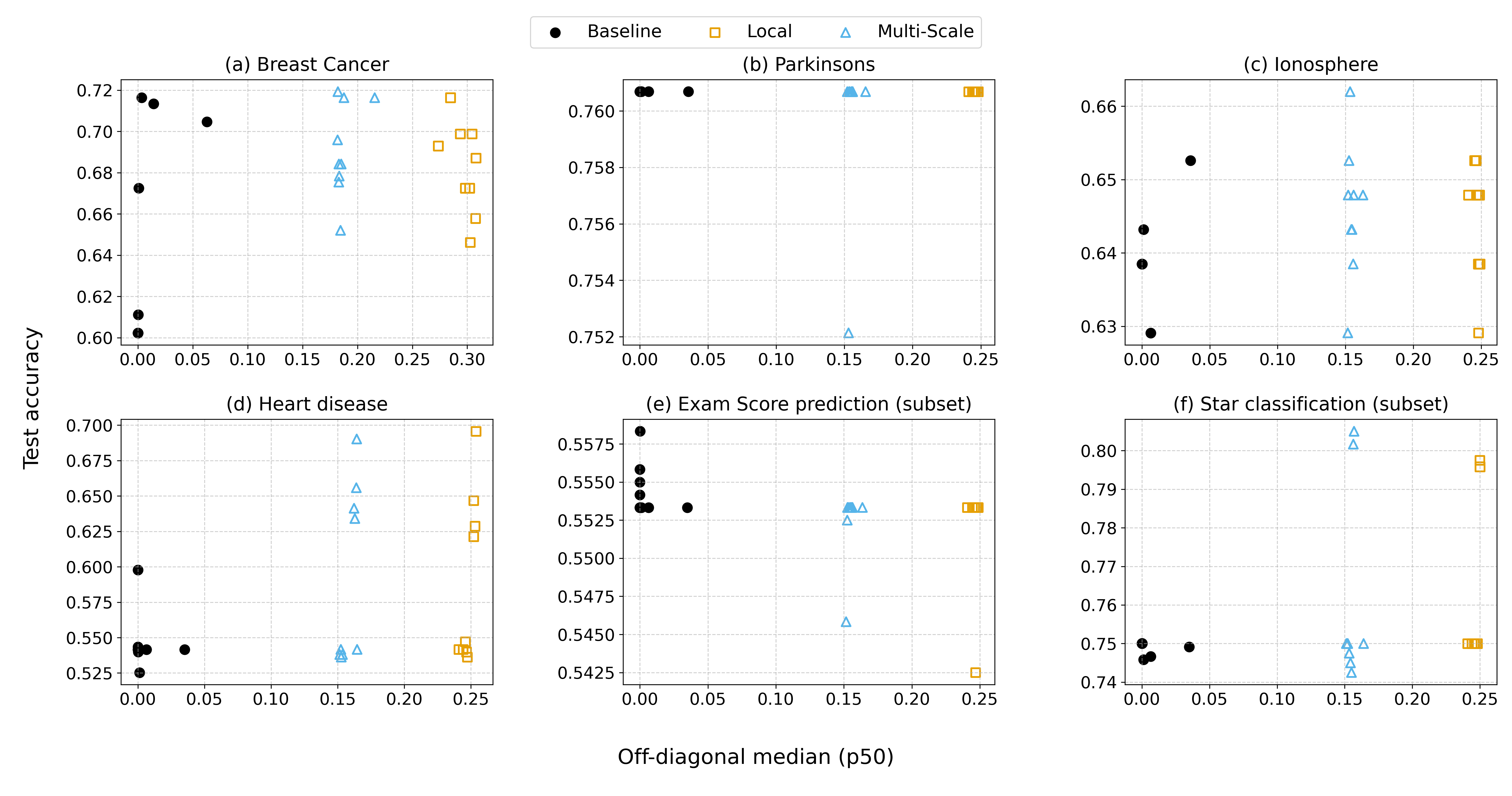}
\vspace{-5mm}
\caption{Relationship between median off-diagonal similarity (p50) and SVM test accuracy for the global fidelity baseline, local, and multi-scale kernels. Each panel corresponds to one dataset, and the points span the feature-dimension sweep. Higher p50 indicates less off-diagonal concentration, but its relationship with test accuracy is dataset- and dimension-dependent.}
\label{fig:tradeoff_scatter}
\end{figure}

\subsection{Representative eigen-spectra at \texorpdfstring{$d=12$}{d=12}}

In Fig.~\ref{fig:spectra_examples_d12}, we compare the normalized kernel eigenvalue spectra at feature dimension $d=12$ for the Baseline, Local, and Multi-Scale kernels across six datasets. Eigenvalues are normalized by their trace and plotted on a logarithmic scale to emphasize spectral shape. The Baseline typically exhibits a flatter spectrum, while the Local kernel displays a steeper decay, indicating stronger spectral concentration. The Multi-Scale kernel generally produces an intermediate spectral shape. These differences describe how spectral mass is allocated; they do not, by themselves, establish that one kernel is more informative for the prediction task.

\begin{figure}[tbp]
\centering
\includegraphics[width=0.9\linewidth]{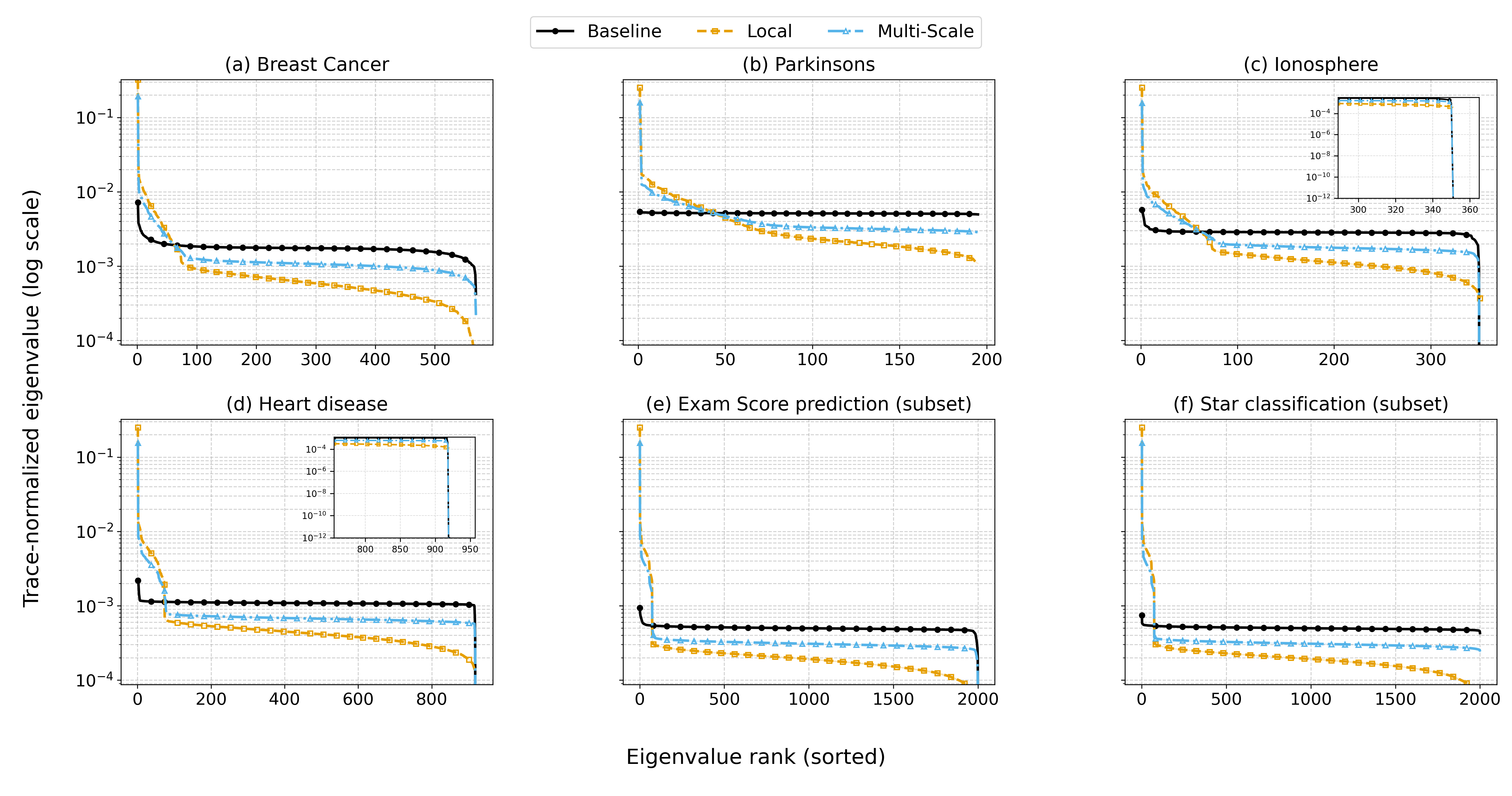}
\vspace{-5mm}
\caption{Representative trace-normalized eigenvalue spectra at $d=12$. Each panel compares the kernel eigen-spectrum for baseline, local, and multi-scale constructions on the same dataset and dimension. To compare spectral \emph{shape} rather than scale, for each run the eigenvalues are normalized by the trace ($\sum_i\lambda_i$) before aggregating across the fixed seed grid. The plotted curves show the mean across runs, and the vertical axis is shown on a semi-log scale.}
\label{fig:spectra_examples_d12}
\end{figure}

\subsection{Validation-selected \texorpdfstring{$C$}{C} vs.\ \texorpdfstring{$d$}{d}}\label{app:bestC_vs_d}

Fig.~\ref{fig:bestC_vs_d} reports the value of the SVM regularization parameter $C$ selected by validation for each dataset and feature dimension $d$, for each kernel family. This diagnostic is included to assess the stability of model selection across dimensions and kernels. Across most datasets, the selected $C$ values remain broadly consistent across kernel families, indicating that the observed performance differences primarily reflect the underlying kernel representations rather than changes in classifier regularization.

\begin{figure}[tbp]
\centering
\includegraphics[width=0.9\linewidth]{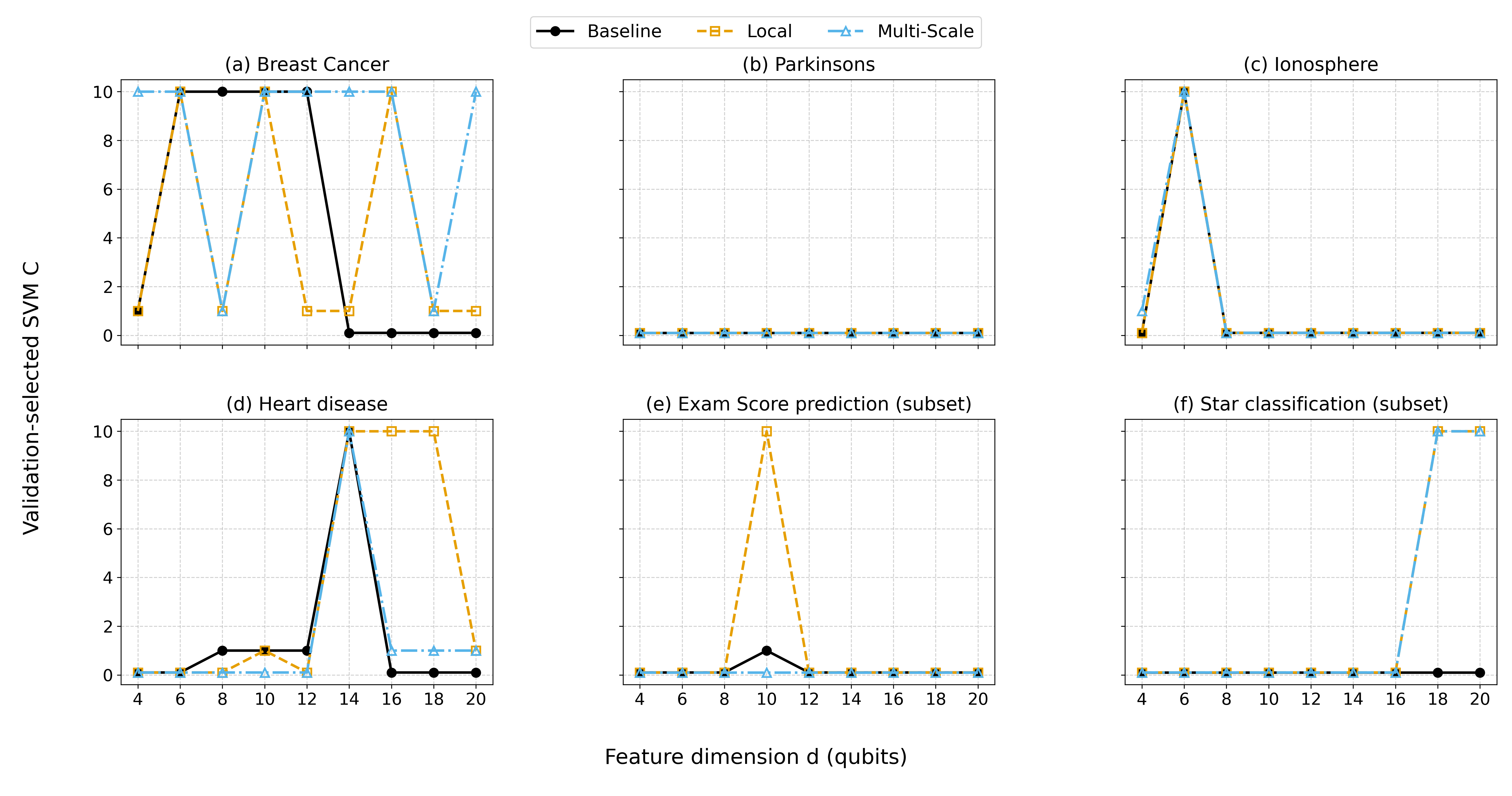}
\vspace{-5mm}
\caption{Validation-selected SVM regularization parameter $C$ vs.\ feature dimension $d$. For each dataset, dimension, and kernel family, each point shows the mode across the fixed seed grid of the $C$ values selected independently by validation for each seed from the grid in Eq.~\eqref{eq:svm_C_grid}.}
\label{fig:bestC_vs_d}
\end{figure}

\section{Proofs of the mean-scale theorems}\label{app:proofs}

\subsection{Proof of Theorem~\ref{thm:patch-concen}: Patch-wise mean scale}

\begin{proof}
For a Haar-random pure state, the Haar first-moment identity \cite{mele2024introduction} gives
\begin{equation}
    \mathbb{E}_{\varphi}\!\left[
    \ket{\varphi}\bra{\varphi}
    \right]
    =
    \frac{I_{2^d}}{2^d}.
\end{equation}
Taking the partial trace over $P_m^c$ yields
\begin{equation}
    \mathbb{E}_{\varphi}\!\left[
    \rho_{P_m}^{\varphi}
    \right]
    =
    \operatorname{Tr}_{P_m^c}\!\left(
    \frac{I_{2^d}}{2^d}
    \right)
    =
    \frac{I_{2^k}}{2^k}.
\end{equation}
Using the independence of $\ket{\psi}$ and $\ket{\varphi}$, and writing $\mathbb{E}_{\psi}$ and $\mathbb{E}_{\varphi}$ for the expectations over the corresponding Haar-random states, we obtain
\begin{align}
\mathbb{E}_{\psi,\varphi}\!\left[
\kappa_{P_m}(\psi,\varphi)
\right]
&=
\mathbb{E}_{\psi}\!\left[
\operatorname{Tr}\!\left(
\rho_{P_m}^{\psi}
\,
\mathbb{E}_{\varphi}\!\left[
\rho_{P_m}^{\varphi}
\right]
\right)
\right]
\\
&=
\mathbb{E}_{\psi}\!\left[
\operatorname{Tr}\!\left(
\rho_{P_m}^{\psi}
\frac{I_{2^k}}{2^k}
\right)
\right]
\\
&=
\frac{1}{2^k}
\mathbb{E}_{\psi}\!\left[
\operatorname{Tr}\!\left(
\rho_{P_m}^{\psi}
\right)
\right]
\\
&=
2^{-k},
\end{align}
where $\operatorname{Tr}(\rho_{P_m}^{\psi})=1$.
\end{proof}

\subsection{Proof of Theorem~\ref{thm:local_concen}: Local kernel mean scale}

\begin{proof}
By the definition of the local kernel in Eq.~\eqref{eq:local_kernel} and linearity of expectation,
\begin{align}
\mathbb{E}_{\psi,\varphi}\!\left[
k_{\mathrm{loc}}(\psi,\varphi)
\right]
&=
\mathbb{E}_{\psi,\varphi}\!\left[
\sum_{m=1}^{M}
w_m\,
\kappa_{P_m}(\psi,\varphi)
\right]
\\
&=
\sum_{m=1}^{M}
w_m\,
\mathbb{E}_{\psi,\varphi}\!\left[
\kappa_{P_m}(\psi,\varphi)
\right].
\end{align}
Since every patch has size $k$, Theorem~\ref{thm:patch-concen} gives
\begin{equation}
\mathbb{E}\!\left[
\kappa_{P_m}(\psi,\varphi)
\right]
=
2^{-k}
\end{equation}
for every $m$. Therefore,
\begin{align}
\mathbb{E}\!\left[
k_{\mathrm{loc}}(\psi,\varphi)
\right]
&=
2^{-k}
\sum_{m=1}^{M}w_m
\\
&=
2^{-k},
\end{align}
where the last equality follows from
$\sum_{m=1}^{M}w_m=1$.
\end{proof}

\subsection{Proof of Theorem~\ref{thm:multiscale_concen}: Multi-scale kernel mean scale}

\begin{proof}
By the multi-scale construction in Eq.~\eqref{eq:multiscale_kernel} and linearity of expectation,
\begin{align}
\mathbb{E}_{\psi,\varphi}\!\left[
k_{\mathrm{ms}}(\psi,\varphi)
\right]
&=
\mathbb{E}_{\psi,\varphi}\!\left[
\sum_{s=1}^{S}
\alpha_s\,k^{(s)}(\psi,\varphi)
\right]
\\
&=
\sum_{s=1}^{S}
\alpha_s\,
\mathbb{E}_{\psi,\varphi}\!\left[
k^{(s)}(\psi,\varphi)
\right].
\end{align}
Applying Lemma~\ref{lem:scale_concen} gives
\begin{equation}
\mathbb{E}_{\psi,\varphi}\!\left[
k_{\mathrm{ms}}(\psi,\varphi)
\right]
=
\sum_{s=1}^{S}
\alpha_s\,2^{-k_s},
\end{equation}
which proves the result.
\end{proof}

\twocolumngrid

\end{document}